\newtheorem{openproblem}{Open Problem}
\DeclareMathOperator{\RE}{RE}
\DeclareMathOperator{\REG}{REG}
\DeclareMathOperator{\VPL}{VPL}
\DeclareMathOperator{\CF}{DCF}
\DeclareMathOperator{\REGlang}{REGLang}
\DeclareMathOperator{\VPLlang}{VPLang}
\DeclareMathOperator{\CFlang}{DCFLang}
\DeclareMathOperator{\GCFlang}{CFLang}
\DeclareMathOperator{\WE}{WE}
\DeclareMathOperator{\LEN}{LEN}
\DeclareMathOperator{\CON}{CON}
\DeclareMathOperator{\+}{+}
\DeclareMathOperator{\ra}{\rightarrow}
\DeclareMathOperator{\nat}{\mathbb{N}}
\title{Formal Languages via Theories over Strings}%: What's Decidable?} %TODO Please add
\author{Joel {Day}}{Loughborough University, United Kingdom}{J.Day@lboro.ac.uk}{}{}
\author{Vijay {Ganesh}}{University of Waterloo, Canada}{vijay.ganesh@uwaterloo.ca}{}{}
\author{Nathan {Grewal}}{University of Waterloo, Canada}{negrewal@uwaterloo.ca}{}{}
\author{Florin {Manea}}{Universität Göttingen, Germany}{florin.manea@cs.informatik.uni-goettingen.de}{}{}
\authorrunning{J. Day, V. Ganesh, N. Grewal, and F. Manea} %TODO mandatory. First: Use abbreviated first/middle names. Second (only in severe cases): Use first author plus 'et al.'
\keywords{String constraints, Word equations, Formal languages, Word Combinatorics} %TODO mandatory; please add comma-separated list of keywords
\begin{document}
\nolinenumbers

\maketitle

%TODO mandatory: add short abstract of the document
\begin{abstract}
We investigate the properties of formal languages expressible in terms of formulas over quantifier-free theories of word equations, arithmetic over length constraints, and language membership predicates for the classes of regular, visibly pushdown, and deterministic context-free languages. In total, we consider 20 distinct theories and decidability questions for problems such as emptiness and universality for formal languages over them. First, we discuss their relative expressive power and observe a rough division into two hierarchies based on whether or not word equations are present. Second, we consider the decidability status of several important decision problems, such as emptiness and universality. Note that the emptiness problem is equivalent to the satisfiability problem over the corresponding theory. Third, we consider the problem of whether a language in one theory is expressible in another and show several negative results in which this problem is undecidable. These results are particularly relevant in the context of normal forms in both practical and theoretical aspects of string solving. \looseness=-1
%One takeaway of our work is that viewing solutions to word equations through the lens of formal languages can be a very productive way of settling long-standing decidability questions related to these theories.

%Finally, we present a variant of the Greibach's theorem in this setting.

%Our results go in four main directions. Firstly, we give a detailed picture of the relative expressive power of the investigated theories. Secondly, we investigate a series of problems in connection to the decidability status of the emptiness and finiteness problems for the classes of languages expressible by these theories. Thirdly, we consider the universality problem for these classes of languages and, besides investigating the decidability of this problem for them, we also show a variant of Greibach's theorem in this setting. Finally, we address decision problems related to the expressive power of the languages defined in the investigated theories, and investigate their decidability status. 
\end{abstract}

\section{Introduction}
Logical theories based on strings (or words) over a finite alphabet have been an important topic of study for decades. Connections to arithmetic (see e.g. Quine~\cite{quine1946}) and fundamental questions about free (semi)groups underpinned interest in logics involving concatenation and equality. Combining these two things leads to word equations: expressions $\alpha = \beta$ where $\alpha$ and $\beta$ are terms obtained by concatenating variables and concrete words over a finite alphabet. For example, if $x$ and $y$ are variables, and our alphabet is $\Sigma = \{a,b\}$, then $x ab y = y ba x$ is a word equation. Its solutions are variable-substitutions unifying the two sides: $ x \to bb, y \to b$ would be one such solution in the previous example. \looseness=-1

The existential theory of a finitely generated free semigroup consists of formulas made up of Boolean combinations of word equations. In fact, the problem of deciding whether such a formula is true is equivalent to determining satisfiability of word equations, since any such formula can be transformed into a single word equation without disrupting satisfiability (see~\cite{lot:alg,karhumaki2000}). %It was originally hoped that the problem of deciding if a word equation has a solution could facilitate an undecidability proof for Hilbert's famous Tenth problem by providing an intermediate step between Diophantine equations and the computations of Turing Machines. 
Famously, % however, this endeavour failed when 
Makanin showed in 1977 that satisfiability of word equations is algorithmically decidable~\cite{makanin1977}.
%Since then, several improvements to the algorithm proposed by Makanin have been discovered. Two decades later, Plandowski~\cite{plandowski1999} was the first to show that the problem could be solved in PSPACE, and this has later been refined to nondeterministic linear space by Je{\.z} via the Recompression technique~\cite{Jez2017}. Shulz~\cite{schulz1990} showed that the problem remains decidable even when the variables are constrained by  regular languages, limiting the possible substitutions (see also Chapter 12 of~\cite{lot:alg}). On the other hand, if length constraints (requiring that some pairs of variables are substituted for words of the same length) are permitted, then it remains a long-standing open problem as to whether or not the problem is decidable.
Since then, several improvements to the algorithm proposed by Makanin have been discovered:  Plandowski~\cite{plandowski1999} showed that the problem could be solved in PSPACE, which has later been refined to nondeterministic linear space by Je{\.z}~\cite{Jez2017}. Schulz~\cite{schulz1990} showed that the problem remains decidable even when the variables are constrained by  regular languages, limiting the possible substitutions (see also Chapter 12 of~\cite{lot:alg}). On the other hand, if length constraints (requiring that some pairs of variables are substituted for words of the same length) are permitted, then the (un)decidability of the problem is a long-standing open problem. \looseness=-1

Word equations, and logics involving strings more generally, have remained a topic of interest within the Theoretical Computer Science community, in particular due to their fundamental role within Combinatorics on Words and Formal Languages, and more recently due to interested from the Formal Methods community. The latter can be attributed to increasing popularity and influence of software tools called string-solvers, which seek to algorithmically solve constraint problems involving strings~\cite{amadini2021survey,hague2019strings}. In this setting, a string constraint is a property or piece of information about an unknown string, and the string solvers try to determine whether strings exist which satisfy combinations of string constraints of various types. Word equations, regular language membership, and comparisons between lengths are all among the most prominent building blocks of string constraints, and when combined are sufficient to model several others. String-solvers are also useful in other areas like Database Theory, particularly e.g. for evaluating path queries in graph databases~\cite{barcelo2017graph} and in connection with Document Spanners~\cite{Frey:doc,Frey:doc2,frey:finite}.

%Another modern area of application comes from Database Theory where string-solvers are also useful, particularly e.g. for evaluating path queries in graph databases~\cite{barcelo2017graph} and in connection with Document Spanners~\cite{Frey:doc,Frey:doc2,frey:finite}. 
%Recently, a finite-model version of the theory of concatenation was considered in this context~\cite{frey:finite}. 
%\todo{Probably this part about data bases can be shortened. DID THIS ALREADY A BIT - PLEASE CHECK}
%{BTW, should we cite the paper about expressibility by data base people? Schwentick, Benedikt, etc. I added it in the bib just in case we want to cite it.}
A wealth of string-solvers are now available~\cite{ss1,berzish2021smt,z3str4,kiezun2009,abdulla2015,trinh2016progressive,barrett2011,abdulla2020efficient} (see~\cite{amadini2021survey,hague2019strings} for an overview). However, the underlying task of determining the satisfiability of string constraints remains a challenging problem and a barrier to more effective implementations. % implementations rely heavily on search heuristics.\looseness=-1
%Motivated in part by the applications in string-solving, and by the desire to make progress on seemingly very difficult open theoretical problems, 
Some results already exist addressing the computability/ complexity and expressibility of combinations of string constraints. \cite{majumdar2021quadratic,Ganesh2012WordEW,le2018decision,lin2016string,liang2015decision} identify restrictions on word equations which result in a decidable satisfiability problem even when length constraints are present. Several further ways of augmenting word equations (i.e., additional predicates or constraints on the variables), are shown to be undecidable  in~\cite{chen2017decidable,day:rp18,buchi1990definability,halfon2017decidability}. 
%Several (un)decidabilty results for the satisfiability problem for logics involving word equations with rewriting operations such as transducers and the ReplaceAll operator are given in~\cite{}.
%Nevertheless, despite progress on satisfiability problems such as those mentioned above, and while the expressive power and computational properties of prominent language classes such as the regular and context-free languages are well understood, little is known about the true expressive power of word equations and of string logics involving word equations in conjunction with other common types of string constraints.
Moreover, Büchi and Senger~\cite{buchi1990definability} 
%\todo{this part about B and S can be shortened. DID THIS ALREADY A BIT - PLEASE CHECK}
considered definability in the theory of concatenation and showed, on the one hand, that length is not definable using equality and concatenation alone, and, on the other hand, that if predicates are present, which count occurrences of at least two different letters,
%then multiplication and addition are both definable for an appropriate encoding of non-negative integers and thus leading to an undecidable theory. 
we obtain an undecidable theory. Karhum\"aki, Plandowski and Mignosi~\cite{karhumaki2000} considered explicitly the question of which formal languages are expressible as the set of solutions to a word equation, projected onto a single variable. They show that several simple languages are not expressible, including some simple regular languages. However, they do not consider the case when word equations may be augmented with additional constraints, which can drastically change the languages expressed.\looseness=-1

Nevertheless, despite results such as those mentioned above, little is known about the true expressive power of word equations and of string logics involving word equations in conjunction with other common types of string constraints. A greater understanding in this regard would help in settling open problems (such as for whether satisfiability for word equations with length constraints is decidable), and also with devising string solving strategies: often simply finding a solution to one constraint is not enough and the set of solutions must be considered more generally in order to account for other constraints which might be present, or to determine that no solution exists. Moreover, a common tactic is to rewrite constraints into some normal form before solving and understanding when and how this can be done also requires knowledge of the relative expressive power of subsets of constraints. \looseness=-1

This work aims to fill gaps in the understanding of the properties and expressivity of some of the most important combinations of string constraints by considering languages expressible in the sense of~\cite{karhumaki2000}. In this regard, our results can be seen as extending~\cite{karhumaki2000} to a more general (and more practical) setting. As such, our wider context requires a range of new approaches and leads to new questions and insights not considered in previous works.

{\bf Our framework:}
We consider a landscape of string-based logics incorporating various types of atoms inspired by and strongly related to prominent varieties of string-constraints. In particular, we consider logics with different combinations of: equality between strings,  concatenations of strings, membership of formal languages, and linear arithmetic over string-lengths.\looseness=-1

In total, we consider 20 distinct families of logical theories (each family containing a different theory for each possible underlying alphabet $\Sigma$), which are introduced in detail in Section~\ref{sec:theories}. Taking inspiration from~\cite{karhumaki2000} (and partly from \cite{BenediktLSS03}, where relation-definability by logics over strings was studied in a database-theory centered framework), 
we study these logics from a formal languages perspective by looking at the set of values a variable may take while preserving satisfiability of a formula. Specifically, given a formula $f$ from a quantifier-free logical theory $\mathfrak{T}$, we say that the language expressed by a variable $x$ occurring in $f$ is the set of concrete values $w$ such that substituting $x$ for $w$ in $f$ yields a satisfiable formula. In the general case, we can think of the property that the formula $f$ defines via the variable $x$. However, since we consider logics in which $x$ is substituted for finite strings, we get a formal language.

We are interested both in the expressive power of the logical theories w.r.t. what languages they can express, and in their computational properties with respect to canonical decision problems within formal languages such as emptiness, universality, equivalence and inclusion.\looseness=-1

Together, the 4 types of atoms we allow cover many of the most prominent types of string constraints, as listed in \cite{amadini2021survey}. 
%In fact, for several combinations we consider we can simulate membership of arbitrary recursively enumerable languages, and thus all reasonable types of string constraints. For this reason, 
While predicates related to equality between strings, concatenations of strings, and linear arithmetic over string-lengths do not need more explanations, a discussion is in order w.r.t. our choice of language membership predicates. In this case, they are considered for the classes of regular, deterministic context-free, or as an intermediary between the two, visibly pushdown languages.
While there are many classes of languages we might choose to consider between regular and deterministic context-free, there are several advantages to choosing the visibly pushdown languages in particular. Firstly, they exhibit an attractive balance of being computationally reasonable (they have many of the desirable closure and algorithmic properties of the regular languages) while simultaneously being powerful enough to provide a reasonable model in many verification and software analysis applications, in line with our motivations from string-solving. Moreover, since they directly generalise the regular languages, but with sufficient memory capabilities to model certain types of length comparisons, the combination of word equations and visibly pushdown language constraints generalise the combination of word equations with both length and regular constraints. The latter is of particular interest in the context of string-solving, but is a case for which the decidability of satisfiability remains open and is likely to be difficult to resolve. We show that satisfiability for the former is undecidable and thus that already a very limited extension to regular and length constraints is enough to reach this negative result.\looseness=-1

{\bf Our results:} Firstly, in Section~\ref{sec:separation}, we compare the relative expressive power of the different theories. On the one hand, we manage to group certain families together, where they express the same class of languages. In the technical Lemma \ref{lemma:cftheories}, we show that adding linear arithmetic over string-lengths to a theory allowing only language membership predicates for a class of languages with good language theoretic properties does not alter its expressive power. Thus, the theories in which only regular language (or visibly pushdown language) membership predicates are allowed and the theories in which length comparison is added to those membership predicates are equivalent. While in the case of theories based on regular language membership predicates we can also add concatenation without changing the expressive power, we show in Theorem~\ref{thm:VPLCONequalsRE} that adding this operation to theories based on visibly pushdown language membership predicates strictly increases their expressive power, as they can express all recursively enumerable languages. Moreover, we also provide several separation results between the classes of language expressed by various theories. One of the ways we achieve this is by non-trivially extending pumping-lemma style tools for word equations from~\cite{karhumaki2000} to our more general settings. The overall hierarchy of classes of languages expressible in our theories is depicted in Figure~\ref{fig:overviewoftheories}. \looseness=-1

While our results from Section~\ref{sec:separation} are already interesting from a language-theoretic point of view, they are also relevant for the emptiness problem for classes of languages expressed by our theories, which is equivalent to the satisfiability problem for formulas over those theories. As such, our results allow us to non-trivially extend the state-of-the-art related to the satisfiability of string constraints. In particular, we settle the previously mentioned interesting case in which word equations (in fact, even concatenation without explicit string-equality is sufficient) are combined with visibly pushdown language membership constraints. When combined with existing results, our results establish a relatively complete description of when the emptiness problem is decidable/undecidable (see the left part of Figure~\ref{fig:overviewoftheories}). The cases left open are the combinations of word equations with length constraints with or without regular constraints, which are long-standing open problems. \looseness=-1

Further, in Section~\ref{sec:universality}, we consider the universality problem and a related variant, namely the subset universality problem in which we want to test whether a language is exactly $S^*$ for a subset $S$ of the underlying alphabet. Again, our results fill in gaps in the knowledge and allow us to paint a comprehensive picture of the decidability status of these problems for our theories (see the right part of Figure~\ref{fig:overviewoftheories}). Since the universal language is expressible in all our theories, in combination with results from Section~\ref{sec:separation} and from the literature, we obtain a complete picture for the equivalence and inclusion problems. However, a substantial further benefit (and a large part of our motivation for studying this problem) is that it allows us to use Greibach's theorem in many cases (as stated in Theorem~\ref{the:greibachapplies}) to establish further undecidability results (e.g., Theorems \ref{the:eliminatinglength} and \ref{the:greibachpumping}). In particular, Theorem \ref{the:eliminatinglength} is part of a larger line of thought, developed in Section~\ref{sec:expressivity}, in which we consider the question of when it is (un)decidable if a language expressed in one theory can be expressed in another. Such problems are particularly interesting in the context of practical string solving,  because they essentially ask whether a property defined by one kind of string constraint can be algorithmically converted to another. Often, it is the combinations of different kinds of string constraints which lead to high complexities in solving, so being able to rewrite constraints in different forms can be a powerful pre-processing technique. We also identify some interesting cases where Greibach's theorem is not applicable, and thus where other approaches are needed (e.g. Theorem \ref{the:undecifWEisREG}). \looseness=-1

%To obtain all these results, we non-trivially extend results from the literature, but we also develop, and use in non-trivial (and sometimes unexpected) ways, a series of techniques rooted in formal language and automata theory, combinatorics on words, or computability. In this extended abstract, we aim to give a complete picture of the approached problems so,  to contextualize our novel contributions. For space reasons, the proofs are given in the Appendix.\looseness=-1

{\bf Summary:}
Our aim in this contribution is to obtain a more complete understanding of the computational properties and expressivity of languages expressed by various combinations of commonly occurring types of string constraints. Naturally, we are able to account for several cases by recalling, or extending existing results from literature, so at the beginning of each section, we give a single theorem that summarizes existing results and discuss their consequences. This allows us to subsequently focus on the most interesting remaining cases, many of which we are able to resolve by drawing on a range of techniques rooted in formal languages, automata theory, combinatorics on words and computability theory. Our contributed results are a substantial improvement the state of understanding of the theories considered, particularly with respect to their expressive power. In those cases we are unable to resolve, we identify several interesting new open problems and a need for novel techniques for tackling them. %For space reasons, the proofs are given in the Appendix.\looseness=-1

%\todo{As things are formated now, we have about 5 lines here}

\section{Preliminaries}\label{sec:preliminaries}

Let $\mathbb{N} = \{1,2,3,\ldots\} $ and $\mathbb{N}_0 = \{0\} \cup \mathbb{N}$. Let $\mathbb{Z}$ denote the set of integers. Let $\Sigma = \{a_1,a_2,\ldots, a_n\}$ be an alphabet. We denote by $\Sigma^*$ the set of all words over $\Sigma$ including the empty word, which we denote $\varepsilon$. In other words, $\Sigma^*$ is the free monoid generated by $\Sigma$ under the operation of concatenation. For words $u,v \in \Sigma^*$ we denote their concatenation either by $u \cdot v$ or simply as $uv$. Given a set of variables $\mathcal{X} = \{x_1,x_2,\ldots\}$ and an alphabet $\Sigma$, a {\em word equation} is a pair $(\alpha,\beta) \in (X\cup \Sigma)^* \times (X\cup \Sigma)^*$, usually written as $\alpha = \beta$. A solution to a word equation is a substitution of the variables for words in $\Sigma^*$ such that both sides of the equation become identical. Formally, we model solutions as morphisms. That is, we say a substitution is a (homo)morphism $h : (X \cup \Sigma)^* \to \Sigma^*$ satisfying $h(a) = a$ for all $a \in \Sigma$, and a solution to a word equation $\alpha = \beta$ is a substitution $h$ such that $h(\alpha) = h(\beta)$.

We refer to~\cite{HopcroftUllman} for standard definitions and well known results from formal language theory regarding e.g. recursively enumerable languages ($\RE$), regular languages ($\REGlang$), context free languages ($\GCFlang$), deterministic context-free languages ($\CFlang$), finite and pushdown automata, etc. 

In addition, we refer to \cite{Alur2004,AlurKMV05,AlurM09} for background on visibly pushdown automata and visibly pushdown languages ($\VPLlang$) but also give here the main definitions. More precisely, a pushdown alphabet $\widetilde{\Sigma}$ is a triple $(\Sigma_c, \Sigma_i, \Sigma_r)$ of pairwise-disjoint alphabets known as the call, internal and return alphabets respectively. A visibly pushdown automaton (VPA) is a pushdown automaton for which the stack operations (i.e. whether a push, pop or neither is performed) are determined by the input symbol which is read. In particular, any transition for which the input symbol $a$ belongs to the call alphabet $\Sigma_c$, must push a symbol to the stack while any transition for which $a \in \Sigma_r$ must pop a symbol from the stack unless the stack is empty and any transition for which $a \in \Sigma_i$ must leave the stack unchanged. Acceptance of a word is determined by the state the automaton is in after reading the whole word. The stack does not need to be empty for a word to be accepted. A $\widetilde{\Sigma}$-visibly pushdown language is the set of words accepted by a visibly pushdown automaton with pushdown alphabet $\widetilde{\Sigma}$. A language $L$ is a visibly pushdown language (and is part of the class $\VPLlang$) if there exists a pushdown alphabet $\widetilde{\Sigma}$ such that $L$ is a $\widetilde{\Sigma}$-visibly pushdown language. The class $\VPLlang$ is a strict superset of the class of regular languages and a strict subset of the class of deterministic context-free languages, which retains many of the nice decidability and closure properties of regular languages. In particular, it is shown in~\cite{Alur2004} that $\VPLlang$ is closed under union, intersection and complement and moreover that the emptiness, universality, inclusion and equivalence probelms are all decidable for $\VPLlang$.

By a {\em theory}, we mean a set $\mathfrak{T} = \{f_1,f_2,\ldots \}$ of formulas adhering to given syntax and to which we associate a particular semantics. The theories we consider (introduced in Section \ref{sec:theories}) consist of quantifier-free formulas. The typical computational questions one might consider with respect to a given theory $\mathfrak{T}$ are {\em Satisfiability}: given formula $f \in \mathfrak{T}$, does there exist an assignment of the variables in $f$ such that $f$ becomes true under the associated semantics? and {\em Validity}: given formula $f \in \mathfrak{T}$, is $f$ true under all assignments of the variables occurring in $f$? \looseness=-1

The questions we address have a slightly different flavour: given formula $f \in \mathfrak{T}$ and variable $x$ occurring in $f$, we are interested in properties of the set of all values $w$ for which there is an assignment mapping $x$ to $w$ which makes the formula true. Thus, we consider the set of concrete values $w$ for which $f$ remains satisfiable once the variable $x$ has been replaced by $w$. Since we shall focus on theories in which variables represent words, we refer to the set of all such values $w$ as the {\em language} expressed by the variable $x$ in the formula $f$. In this respect, we extend the notion of languages expressible by word equations~\cite{karhumaki2000} to arbitrary string-based logical theories. We say a language $L$ is {\em expressed } by a formula if it contains a variable $x$ such that $L$ is the language expressed by $x$ in $f$. We say that $L$ is {\em expressible} in a theory $\mathfrak{T}$ if there exists a formula $f \in \mathfrak{T}$ and variable $x$ occurring in $f$ such that $L$ is expressed by $x$ in $f$.\looseness=-1

We shall consider typical decision problems such as emptiness and universality for languages expressed by formulas in a given theory $\mathfrak{T}$. In this context, the input is a formula $f \in \mathfrak{T}$ and a variable $x$ occurring in $f$. So, e.g, in the case of emptiness, we might be given a formula $x = aba \land x \cdot y = ababba$ along with the variable $y$, and we must decide whether the language $L_y$ expressed by $y$ in that formula is the empty set or not. In this case, $L_y = \{bba\} \not= \emptyset$ so the answer is no. Clearly, for any formula $f$ and variable $x$, the emptiness problem for the language expressed by $x$ in $f$ is equivalent to the satisfiability problem for $f$. Thus, we consider a set of problems which directly generalise the satisfiability problem.

For theories containing word equations, we shall use notions and results from~\cite{karhumaki2000} to reason about (in)expressibility of languages, such as the notion of a {\em synchronising} $\mathfrak{F}$-factorisation.\looseness=-1

\begin{definition}[\cite{karhumaki2000}]
Let $\mathfrak{F}$ be a property of words. An $\mathfrak{F}$-factorisation of a word $w$ is a factorisation $w = w_1\ldots w_n$ such that each $w_i$ has the property $\mathfrak{F}$. Moreover $\mathfrak{F}$ is {\em synchronising} if every word has exactly one $\mathfrak{F}$-factorisation and additionally there exist $l,r \in \mathbb{N}$ such that for any words $x,y$ with $\mathfrak{F}$-factorisations $x = x_1x_2\ldots x_s$ and $y = y_1 y_2 \ldots y_k$ where $k > l+r$ and $y$ is a factor of $x$, the following hold: 
\begin{enumerate}
\item there exist $l' \leq l, r' \leq r$ and $p,q$ with $1 \leq p \leq q \leq s$ such that $q-1-p = k-r'-l'$ and $x_p = y_{l'+1}$, $x_{p+1} = y_{l'+1}$... $x_q = y_{k-r'+1}$; 
\item $y_1 y_2 \ldots y_{l'}$ is a suffix of $x_{\max(1,p-l)} \ldots x_{p-1}$; 
\item $y_{k-r'+1} \ldots y_{k} $ is a prefix of $x_q \ldots x_{\min(s,q+r-1)}$.
\end{enumerate}
Intuitively, $\mathfrak{F}$ is synchronising for some parameters $l,r$ if the $\mathfrak{F}$-factorisations of a word $x$ and a factor $y$ of $x$ synchronise (or coincide) except for the first $l$ parts and the last $r$ parts.
\end{definition}

\begin{remark}
\cite{karhumaki2000} provides several examples of synchronising factorisations, including splitting a word into blocks of a single letter which is clearly synchronising.
\end{remark}

%\section{Overview of Theories}\label{sec:theories}
\section{Logical Theories Over Strings Constraints}\label{sec:theories}

In this section, we introduce a variety of logical theories encompassing the most common kinds of string constraints (as overviewed in \cite{amadini2021survey}). We define three sets of terms as follows. Let $\mathcal{X} = \{x_1,x_2,\ldots \}$ be an infinite set of string variables. Let $\Sigma$ be a finite alphabet. Let $\mathcal{T}_{str}^\Sigma = \mathcal{X} \cup \Sigma^*$ be the set of {\em basic string terms}. Let $\mathcal{T}_{str,con}^{\Sigma} = (\mathcal{X} \cup \Sigma)^*$ be the set of {\em extended string terms}. Note that $\mathcal{T}_{str,con}^{\Sigma}$ is the closure of $\mathcal{T}_{str}^\Sigma$ under the concatenation $(\cdot)$ operation. Let $\mathcal{T}_{arith}^\Sigma = \{k_0 + k_1|s_1| +k_2|s_2| + \ldots + k_n|s_n| \mid n \in \mathbb{N}_0, k_i \in \mathbb{Z}, \text{ and } s_i \in \mathcal{T}_{str}^\Sigma\}$ be the set of {\em length terms}. We interpret $|s|$ as the length of the string term $s$, so $\mathcal{T}_{arith}^\Sigma$ is the set of linear combinations of lengths of string terms. Note that since we can express the length of a concatenation of string terms as a linear combination of lengths of basic string terms, it is no restriction that $s_i \in \mathcal{T}_{str}^\Sigma$ rather than $\mathcal{T}_{str,con}^\Sigma$ (this allows us to consider theories containing length terms both with and without concatenation). We construct three types of atoms from terms as follows:\looseness=-1

%\begin{enumerate}
%\item[(A1)]
\noindent (A1) Language membership constraints of the form $s \in L$ where $s \in \mathcal{T}_{str(, con)}^\Sigma$ and $L\subseteq \Sigma^*$ is a formal language,\\
%\item[(A2)]
(A2) Length constraints of the form $\ell_1 = \ell_2$ where $\ell_1,\ell_2 \in \mathcal{T}_{arith}^\Sigma$,\\
%\item[(A3)]
(A3) Word equations (string-equality constraints) of the form $s_1 = s_2$ where $s_1,s_2 \in \mathcal{T}_{str,con}^\Sigma$.
%\end{enumerate}

Formulas in our theories are constructed in general as follows:\\
%\begin{enumerate}
%\item[(F1)]
(F1) Any atom is a well-formed formula,\\
%\item[(F2)]
(F2) If $f_1,f_2$ are well-formed formulas then $\neg f_1$ is a well-formed formula and $f_1 \oplus f_2$ is a well-formed formula for each $\oplus \in \{\land,\lor,\implies,\iff\}$.
%\end{enumerate}

Note that all formulas are quantifier-free. The semantics associated with these formulas are defined in the natural way: given a substitution for the variables $x_1,x_2,\ldots$ for words in $\Sigma^*$, each string term evaluates to a word in $\Sigma^*$ (possibly as the result of concatenating several smaller words in the case of extended string terms). Each length term is a linear combination of lengths of strings and evaluates to an integer. Atoms of type A1 evaluate to ``true'' if the string term $s$ evaluates to a word in the language $L$ and false otherwise. Atoms of type A2 evaluate to true if the two length terms $\ell_1,\ell_2$ evaluate to the same integer and false otherwise. Atoms of type A3 evaluate to true if the string terms $s_1$ and $s_2$ evaluate to the same word and false otherwise. Finally, Boolean combinations of the form F2 are evaluated in the canonical way.\looseness=-1

The most general logical theory we consider includes all of the above and we consider language membership constraints $s \in L$ where $L$ is a deterministic context-free language, given e.g. as a deterministic  push-down automaton or a context-free grammar. However, we are not just interested in this theory alone, rather we want to consider various sub-theories in order to compare their expressive power and computability-related properties.

We have two ways of restricting expressive power. The first is to restrict the types of terms/atoms we allow, while the second is to restrict the kind of languages we allow in the language membership constraints (atoms of type A1). For the latter, we focus on three main possibilities: regular languages, visibly push-down languages, and deterministic context-free languages. For technical completeness, we can assume that all language constraints are given as automata (NFA, Visibly-PDA, or Deterministic-PDA respectively), however, since we do not focus on precise complexity-related issues, equivalent language descriptors such as grammars could equally be used. In particular, we might use simpler descriptors where convenient to do so and where it is obvious that an equivalent automaton could be constructed. \looseness=-1

We consider all combinations of atom-types A1, A2 and A3, and in each case define versions in which only basic string terms from $\mathcal{T}_{str}^\Sigma$ are allowed and versions in which concatenations of string terms (i.e. terms from $\mathcal{T}_{str,con}^\Sigma$) are allowed. Note that whenever we allow word equations (so, atoms of type A3), we might as well allow concatenations of string terms. If we allow concatenations in word equation terms, then we can model concatenation in all string terms anyway and if we were to restrict equality between string terms to basic string terms only, then we could easily eliminate all string equalities by direct substitution. 

Moreover, we are not going to consider explicitly the case that only length constraints (atoms of type A2) are allowed, since this reduces to the existential fragment of Presburger arithmetic and is therefore not really a string-based logic. With these exclusions, we are left with a total of 20 theories to consider. In fact, since the theories themselves depend on the underlying alphabet $\Sigma$, we have 20 families of theories. As such, it is convenient to introduce a naming convention for these (families of) theories.

If atoms of type A1 are allowed, we add either $\REG$, $\VPL$, or $\CF$ to the name of the theory-family depending on the class of languages permitted: $\REGlang$, $\VPLlang$, or $\CFlang$, respectively. If atoms of type A2 are allowed, we add the abbreviation $\LEN$, separated if necessary by a "$\+$". Likewise, if atoms of type A3 are allowed, we add the abbreviation $\WE$. Finally, if atoms of type A3 are not allowed, but extended string terms are (so we have concatenation but not equality between string terms), then we add the abbreviation $\CON$. Note that $\CON$ is superseded by $\WE$ due to reasons explained above. For example, the most general theory which allows all three atom types (with deterministic context-free languages for atoms of type A1) is denoted by $\WE \+ \CF\+\LEN$. Similarly, $\REG \+ \LEN\+\CON$ describes the theory in which atoms of type A1 (where $L$ is a regular language and $s$ is an extended string term) and A2 are allowed.

For theories allowing $\VPLlang$ membership constraints (i.e. belonging to families of the form $\VPL +\ldots$), we assume a fixed partition of the alphabet $\Sigma$ into the call, return and internal alphabets $\Sigma_c, \Sigma_r, \Sigma_i$. We conclude this section with the following remark.

\begin{remark}\label{remark:basictheories}
Since $\REGlang$ (respectively, $\VPLlang$) is closed under union, intersection and complement, the set of languages  expressible in $\REG$ (respectively, $\VPL$) is exactly $\REGlang$ (respectively, $\VPLlang$). However, the same is not true for $\CF$ and $\CFlang$, since that class is not closed e.g. under intersection. For $\CF$ the expressible languages are exactly the Boolean closure of the deterministic context-free languages. Moreover, it can be inferred from well-known results on word equations (see e.g.~\cite{karhumaki2000,lot:alg}) that the languages expressed by $\WE$ are exactly those expressible by a single word equation in the sense of~\cite{karhumaki2000}.
\end{remark}

\section{Separation and Grouping of Theories}\label{sec:separation}

\begin{figure*}
%\begin{center}
%%%\begin{subfigure}{\textwidth}
\hspace*{-10pt} \includegraphics{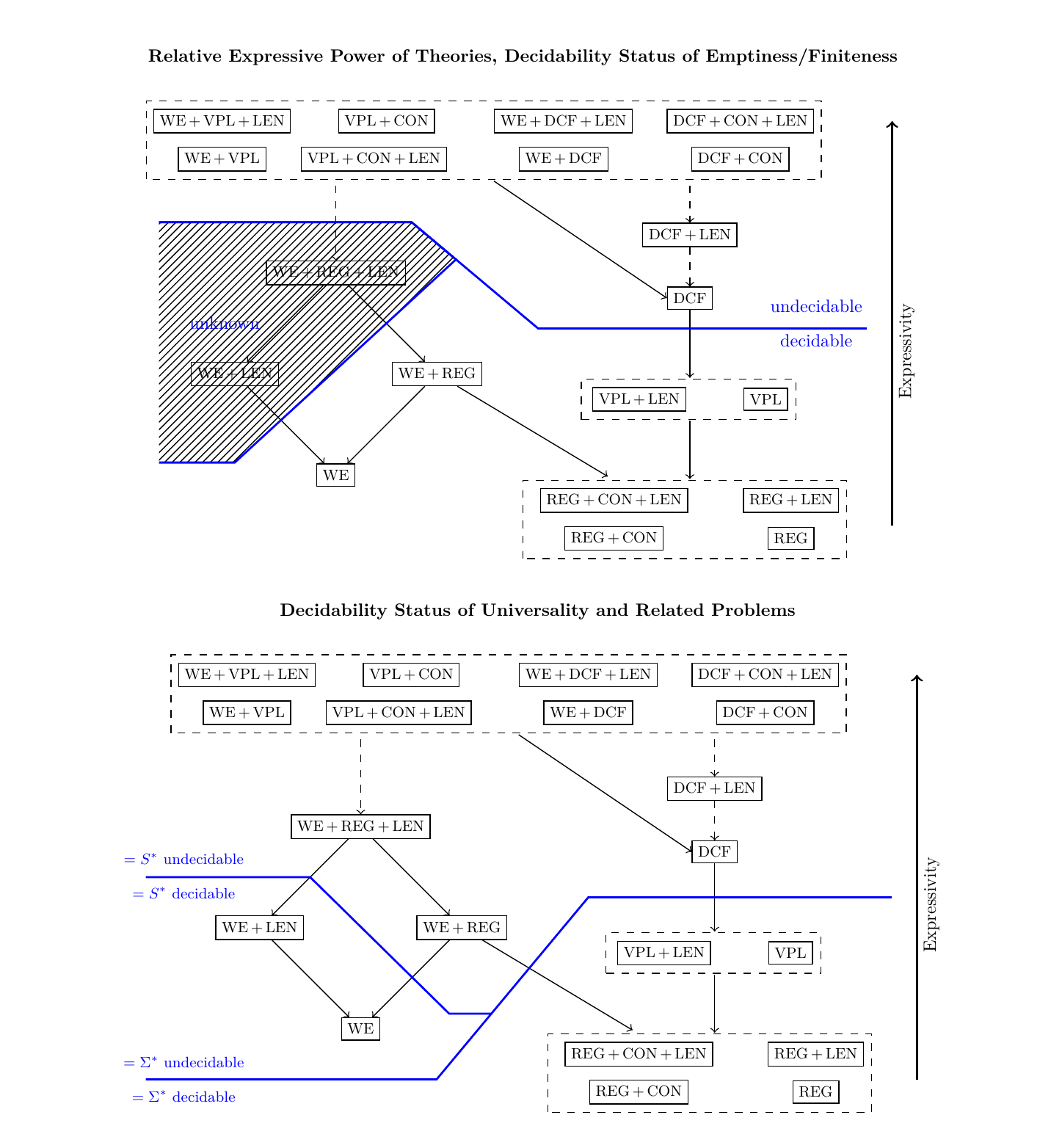}
%\end{center}

\caption{Visual representations of all 20 families of string-based logical theories considered. Theory-families are depicted in solid square boxes containing their names (see Section~\ref{sec:theories}). Dashed square-boxes around multiple theory-families show equivalence w.r.t. the class of expressible languages (so equivalence under $\sim$). The arrows between theory-families and their transitive closure represent inclusion w.r.t. the class of expressible languages. 
%For example, the arrow between $\WE \+ \REG$ and $\WE$ indicates that all languages expressible in $\WE$ are expressible in $\WE \+ \REG$. 
Solid arrows indicate that the inclusion is known to be strict, while dashed arrows indicate that we do not know whether the inclusion is strict or not. The most expressive group of theories (i.e. those equivalent to $\VPL\+\CON $) are able to express $\RE$. The left figure indicates for which (families of) theories the emptiness and finiteness problems are decidable or undecidable. Our contribution includes undecidability for $\VPL \+ \CON$. This is a particularly interesting case because it is only a slight generalisation of important cases which remain open, namely $\WE \+ \REG \+ \LEN$ and $\WE \+ \LEN$ (shaded). We also show decidability for $\VPL \+ \LEN$, $\REG \+ \CON \+ \LEN$ etc. by showing equivalence to families for which decidability is known ($\REG$ and $\VPL$). The right figure depictes (families) of theories for which the universality ($=\Sigma^*$) and subset-universality ($= S^*$) problems are decidable/undecidable. Equivalence and inclusion (where the two languages might come from different theories) are decidable if and only if both theories fall into cases where universality ($=\Sigma^*$) decidable.
}\label{fig:overviewoftheories}
\end{figure*}

We are interested primarily in whether we can decide properties of a language expressed by a given formula and variable. Therefore, the first thing we consider is the relative expressive power of the various theories defined in the previous section. In particular, we want to understand how the classes of languages which may be expressed by a formula/variable from a given theory relate to each other. To make these comparisons formally, we define the following relation(s) on two logical theories $\mathfrak{T}_1, \mathfrak{T}_2$ whose formulas contain string variables. %(i.e. variables whose set of possible substitutions is $\Sigma^*$ for some alphabet $\Sigma$).

\begin{definition}\label{def:classes}
Let $\mathfrak{T}_1, \mathfrak{T}_2$ be theories whose formulas contain string-variables. We say that $\mathfrak{T}_1 \preceq \mathfrak{T}_2$ if, for every formula $f \in \mathfrak{T}_1$ and every (string) variable $x$ occurring in $f$, there exists a formula $f' \in \mathfrak{T}_2$ and variable $x'$ in $f'$ such that the languages expressed by $x$ in $f$ and $x'$ in $f'$ are identical. Moreover, we say that $\mathfrak{T}_1 \sim \mathfrak{T}_2$ if both $\mathfrak{T}_1 \preceq \mathfrak{T}_2$ and $\mathfrak{T}_2 \preceq \mathfrak{T}_1$ hold. We write $\mathfrak{T}_1 \prec \mathfrak{T}_2$ if  $\mathfrak{T}_1 \preceq \mathfrak{T}_2$ and $\mathfrak{T}_1 \not\sim \mathfrak{T}_2$.
\end{definition}

Hence, $\mathfrak{T}_1 \preceq \mathfrak{T}_2$ if the class of languages expressible in $\mathfrak{T}_1$ is a subset of the class of languages expressible in $\mathfrak{T}_2$, and $\mathfrak{T}_1 \sim \mathfrak{T}_2$ if the two classes are equal.
Note that the relation $\sim$ is an equivalence relation that is a weaker notion of equivalence than being isomorphic. That is, two theories need not be isomorphic to satisfy the equivalence $\sim$.

We extend Definition \ref{def:classes} for the families of theories defined in Section \ref{sec:theories} as follows. Recall that each family contains all the theories consisting of a particular set of formulas, but whose underlying alphabet $\Sigma$ may vary.\looseness=-1

\begin{definition}
Let $\mathfrak{F}_1, \mathfrak{F}_2$ be families of theories as defined in Section \ref{sec:theories}. We say that $\mathfrak{F}_1 \preceq \mathfrak{F}_2$ if, for every theory $\mathfrak{T}_1 \in \mathfrak{F}_1$, there is a theory $\mathfrak{T}_2 \in \mathfrak{F}_2$ such that $\mathfrak{T}_1 \preceq \mathfrak{T}_2$. The relations $\sim$ and $\prec$ are then defined analogously,
\end{definition}

Before moving on, let us make some remarks. It will often be the case that there exist formulas such that the language expressed by a variable $x$ occurring in both formulas is the same, but the sets of satisfying assignments, when considered as a whole, are not identical (see Remark \ref{rem:identicalSolutions} below). This has an important implication for what conclusions we can and cannot draw from a statement of the form e.g. $\mathfrak{T}_1 \sim \mathfrak{T}_2$. E.g., while we will later show that $\REG \sim \REG \+ \LEN$, this does not imply that $\WE \+ \REG \+ \LEN \sim \WE \+ \REG$. Indeed we shall also show explicitly that the latter does not hold.

\begin{remark}\label{rem:identicalSolutions}
Consider the $\LEN$ formula $|x| = 2|y|$ where $x,y$ are string variables. Then the language expressed by $x$ is the set of all even-length words over the underlying alphabet $\Sigma$, and the language expressed by $y$ is simply $\Sigma^*$. Both of these languages are regular, and can be expressed in $\REG$. However, if we were to consider e.g. a $\WE \+ \LEN$ formula $x = yyy \land |x| = 2|y|$, then we cannot replace the condition $|x| = 2|y|$ with constraints based on the aforementioned regular languages. The problem with doing so would be that it allows us to decouple the sets of values for $x$ and $y$ satisfying the length constraint (so we get an $x$, $y$, $x'$, $y'$ such that $|x| = |y'|$ and $|x'| = |y|$ and $x = yyy$ holds, but where $x'$ might be different from $x$ and $y'$ might be different from $y$. \looseness=-1
\end{remark}

In~\cite{karhumaki2000} the authors consider expressibility of languages (and relations) by word equations and show that a language is expressible by $\WE$ if and only if it is expressible by a single word equation. The authors of~\cite{karhumaki2000} also show that, for $\Sigma \supseteq \{a,b,c\}$, the regular language $\{a,b\}^*$ is not expressible by a single word equation, and thus not in $\WE$. The same holds for the language $\{a^nb^n \mid n \in \mathbb{N}_0\}$. Since these languages are clearly expressible in $\WE + \REG$ and $\WE + \LEN$ respectively, we may immediately conclude the following.\looseness=-1  

\begin{theorem}[\cite{karhumaki2000}]
The following hold: $\WE \, \prec \, \WE \+ \REG$ and $\WE \, \prec \, \WE \+ \LEN$. 
\end{theorem}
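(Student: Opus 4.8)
The statement is essentially a corollary of two facts established in~\cite{karhumaki2000}, together with the syntactic observation that $\WE$ is a sub-theory of both $\WE \+ \REG$ and $\WE \+ \LEN$. The plan is therefore: (i) dispatch the inclusions $\WE \preceq \WE \+ \REG$ and $\WE \preceq \WE \+ \LEN$ directly from the definitions; (ii) recall from Remark~\ref{remark:basictheories} that the languages expressible in $\WE$ are exactly those expressible by a single word equation in the sense of~\cite{karhumaki2000}; and (iii) exhibit, in $\WE \+ \REG$ and in $\WE \+ \LEN$ respectively, a language that~\cite{karhumaki2000} proves is \emph{not} expressible by a single word equation.

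For (i), every formula $f \in \WE$ over an alphabet $\Sigma$ is verbatim a formula of the theory $\WE \+ \REG$ over $\Sigma$ (and likewise of $\WE \+ \LEN$ over $\Sigma$), and for any variable $x$ occurring in $f$ the language expressed by $x$ in $f$ is unchanged. Hence $\mathfrak{T} \preceq \mathfrak{T}'$ whenever $\mathfrak{T} \in \WE$ and $\mathfrak{T}'$ is the member of $\WE \+ \REG$ (or $\WE \+ \LEN$) over the same alphabet; at the level of families this gives $\WE \preceq \WE \+ \REG$ and $\WE \preceq \WE \+ \LEN$.

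For the strict part, consider an alphabet $\Sigma \supseteq \{a,b,c\}$. In $\WE \+ \REG$ over $\Sigma$ the single atom $x \in \{a,b\}^*$ (a regular membership constraint) is a formula in which $x$ expresses the language $\{a,b\}^*$. In $\WE \+ \LEN$ over $\Sigma$ consider $x = y z \,\land\, ya = ay \,\land\, zb = bz \,\land\, |y| = |z|$: since two words commute iff they are powers of a common word, $ya = ay$ forces $y \in a^*$ and $zb = bz$ forces $z \in b^*$, so together with $|y| = |z|$ the variable $x$ expresses exactly $\{a^n b^n \mid n \in \mathbb{N}_0\}$. By~\cite{karhumaki2000}, neither $\{a,b\}^*$ nor $\{a^n b^n \mid n \in \mathbb{N}_0\}$ is expressible by a single word equation over an alphabet containing $\{a,b,c\}$, hence by Remark~\ref{remark:basictheories} neither is expressible in the corresponding theory of $\WE$; moreover, since each of these witness languages is supported entirely on a two-letter subalphabet, it is also not expressible in any $\WE$-theory over an even larger alphabet (for which~\cite{karhumaki2000} still applies). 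Consequently $\WE \not\sim \WE \+ \REG$ and $\WE \not\sim \WE \+ \LEN$, which combined with (i) yields $\WE \prec \WE \+ \REG$ and $\WE \prec \WE \+ \LEN$.

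There is no substantial obstacle here: the heavy lifting---the inexpressibility of $\{a,b\}^*$ and of $\{a^n b^n\}$ by a single word equation---is imported from~\cite{karhumaki2000}. The only points requiring a little care are (a) verifying that the exhibited witness formulas really do express the claimed languages, in particular that commutation equations of the form $ya = ay$ pin down $a^*$ and that the length atom correctly couples the two exponents, and (b) handling the family-level quantifiers in the definition of $\prec$, i.e.\ ensuring the inexpressibility witnesses survive enlarging the alphabet, which they do for the reason noted above.
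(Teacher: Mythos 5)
Your proposal is correct and follows essentially the same route as the paper: both rely on the fact (from Remark~\ref{remark:basictheories} and \cite{karhumaki2000}) that $\WE$-expressibility coincides with expressibility by a single word equation, and both use the witnesses $\{a,b\}^*$ and $\{a^nb^n \mid n \in \mathbb{N}_0\}$, whose inexpressibility by a single word equation over $\Sigma \supseteq \{a,b,c\}$ is imported from \cite{karhumaki2000}. The only difference is that you spell out explicit witness formulas (the regular membership atom and the commutation-plus-length formula) and check the alphabet-enlargement point, where the paper simply asserts these languages are ``clearly expressible'' in $\WE \+ \REG$ and $\WE \+ \LEN$.
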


%Our first results address theories which extend the expressive power of $\WE \+ \REG$ or $\WE \+ \LEN$.
On the one hand, all languages expressed in our theories are clearly recursively enumerable. On the other hand, in our first main result, we show that, in fact, all recursively enumerable languages can be expressed with only concatenation and $\VPL$-membership.

%We note that deterministic context free membership constraints combined with concatenation are enough to model word equations and length constraints. In particular, we have:

%\begin{theorem}\label{the:CFequivalence}
%$\WE \+ \CF \+ \LEN \, \sim \, \CF \+ \CON \+ \LEN \, \sim \, \WE\+ \CF \, \sim \, \CF \+ \CON$
%\end{theorem}

%We can also show that considering a more general class than $\CF$ in our hierarchy would not lead to any more interesting classes.
\begin{theorem}\label{thm:VPLCONequalsRE}
The class of languages expressible in the familiy $\VPL\+ \CON$ is exactly $\RE$. 
\end{theorem}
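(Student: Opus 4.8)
The plan is to prove two inclusions. The easy direction, that every language expressible in $\VPL\+\CON$ is recursively enumerable, follows because satisfiability of any formula in the theory is semi-decidable: the language membership predicates involve $\VPLlang$ (whose membership problem is decidable), concatenation is computable, and one can enumerate candidate assignments to the finitely many string variables. So the substance is the reverse inclusion: given an arbitrary recursively enumerable language $L\subseteq\Gamma^*$, we must build a formula $f$ in some theory of the family $\VPL\+\CON$ (over a suitable alphabet $\Sigma$ with a suitable call/return/internal partition) and a variable $x$ such that the language expressed by $x$ in $f$ is exactly $L$ (up to an appropriate encoding, or literally, depending on how carefully we set up the alphabet).

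The key idea is to encode computations of a Turing machine $M$ accepting $L$ as strings, and to use the combination of concatenation and $\VPLlang$-membership to check that a string genuinely encodes a valid accepting run. The standard trick is that a sequence of configurations $C_0 \# C_1 \# \cdots \# C_m$ is a valid halting computation iff (i) $C_0$ is an initial configuration on the given input, (ii) $C_m$ is accepting, and (iii) consecutive configurations are related by the transition function. Conditions (i) and (ii) are regular, hence in $\VPLlang$. The local-consistency condition (iii) is the crux: the relation ``$C_{i+1}$ follows from $C_i$'' is not regular when stated directly, but becomes checkable with a stack if we interleave or reverse alternate configurations, so that $C_i$ and the reverse of $C_{i+1}$ can be matched symbol-by-symbol using push/pop operations --- this is exactly the kind of bounded, properly-nested comparison that visibly pushdown automata capture. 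Concretely, I would have the variable $x$ range over strings of the form $w \# \mathit{comp}$, where $w\in\Gamma^*$ is the input and $\mathit{comp}$ is a (suitably formatted, with alternate blocks reversed) computation of $M$ on $w$; I would write, using a constant number of concatenation atoms to split $x$ into its components and a constant number of $\VPL$-membership atoms to enforce (i)--(iii), a formula whose satisfying assignments for the ``input part'' of $x$ are precisely the words accepted by $M$. A final projection/cleanup step (possibly using one more concatenation atom and a membership atom forcing the computation part to be empty or a marker, or restricting attention to a different variable holding just $w$) yields a variable expressing exactly $L$.

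The main obstacle --- and the step requiring the most care --- is arranging the alphabet partition and the block-reversal encoding so that a \emph{single} visibly pushdown automaton (with one fixed pushdown alphabet $\widetilde\Sigma$, since all $\VPL$ atoms in a formula share the partition of $\Sigma$) can verify all the local transitions of a full computation of unbounded length. The difficulty is that a VPA's stack discipline is dictated rigidly by the input letters, so one cannot simply ``push during $C_i$, pop during $C_{i+1}$'' for every $i$ unless the letters of even-indexed and odd-indexed configuration blocks are drawn from the call and return alphabets respectively in a consistent way, and the nesting must return to empty (or a controlled height) between comparisons. Getting a clean, fully nested encoding --- essentially expressing the computation as a well-matched word over a call/return alphabet, in the style of the VPL encodings of context-free or even r.e. sets --- is where the real work lies; once the encoding is fixed, writing the formula is routine. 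I would also need to check the boundary cases ($\varepsilon\in L$, empty or finite $L$) and confirm that the use of only $\CON$ (not full $\WE$) suffices, which it does because all the ``splitting'' of $x$ we need is into a fixed number of pieces and can be done with concatenation terms of the form $x = x_1 x_2 \cdots x_k$ with the $x_i$ further constrained by membership atoms, never requiring genuine two-sided word equations.
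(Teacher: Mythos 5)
Your high-level strategy matches the paper's (encode accepting runs of a Turing machine, use the visibly pushdown stack to compare a configuration with the reverse of its successor, use concatenation to decompose the witness string), and you correctly locate the crux. But you stop exactly at the crux without resolving it, and the specific encoding you sketch does not work. With ``alternate blocks reversed'' --- even-indexed configurations over the call alphabet, odd-indexed reversed configurations over the return alphabet --- a VPA can verify the pairs $(C_{2i},C_{2i+1})$ by pushing during $C_{2i}$ and popping during $C_{2i+1}^R$, but no VPA over the \emph{same} partition can verify the complementary pairs $(C_{2i+1},C_{2i+2})$: there the pops precede the pushes, so the unbounded comparison is impossible. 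The usual escape (intersecting two context-free languages that check the two interleaved families of pairs) is unavailable here, because every $\VPL$ atom in a formula must use one fixed partition of the underlying alphabet, so adding more membership atoms does not change which positions push and which pop.

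The paper supplies the two ideas you are missing. First, configurations are padded with blanks so that all configurations in a given accepting run have the \emph{same length}, and the witness has the palindromic shape $@\,C_1\#\cdots\#C_k\#\,\blacksquare\,\#'C'_k\#'\cdots\#'C'_1\,@'$, with the entire forward computation over call letters and the entire primed reversed part over return letters; a single nondeterministic VPA then accepts exactly the strings in which each $C'_i$ is the primed reverse of the \emph{successor} of $C_i$ (it computes the successor onto the stack while reading $C_i$ and matches it during the popping phase). This alone does not chain consecutive configurations together. The decisive second step is the use of concatenation you did not anticipate: the formula imposes a \emph{second} VPL constraint $y\,\blacksquare\,\#'u\in L_2$ (a primed-palindrome language) on a \emph{shifted} decomposition of the same string, which forces $C'_{i-1}=f(C_i)$ and hence $C_{i-1}\vdash C_i$ for all $i$. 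Concatenation is therefore not merely ``splitting $x$ into a fixed number of pieces''; it is what lets two overlapping membership constraints on shifted factors of one word jointly simulate the computation. Without these ingredients your argument does not go through.
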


\begin{proof}
Let $L\subset \Sigma ^*$ be a recursively enumerable language. It is not hard to show that there exists a 1-Tape deterministic Turing machine $M$ accepting $L$, which additionally has the following properties:
\begin{itemize}
\item $M$ has the set of states $Q$, including a final state $q_f$ and an initial state $q_0$, the input alphabet $\Sigma$, the working alphabet $\Gamma $ which includes $\Sigma$ as well as the blank symbol $B$ and a special delimiter-symbol $\$ $. The transition function of $M$ is $\delta: Q\times \Gamma \ra Q\times (\Gamma \setminus \{B\}) \times \{R,L\}$  (where $R$ and $L$ are symbols denoting a left and, respectively, right movement of the tape-head of $M$). 
\item $M$ has a semi-infinite tape (bounded to the left). We assume that the delimiter-symbol $\$ $ marks the left end of the tape. 
\item $M$ accepts the input word or goes in an infinite loop. Moreover, $M$ accepts only after making at least one step, and in the last (i.e., accepting) configuration the tape-head scans the leftmost blank cell of the tape, and the state of $M$ is $q_f$.
\item The delimiter $\$ $ cannot be modified, and $\$ $ cannot be written on any other cell of the tape, than the leftmost one (i.e., at a certain step, $M$ can only write $\$ $ in the cell that already contains $\$ $).
\item In the initial configuration of $M$, the tape-head scans the cell containing the delimiter $\$ $.
\end{itemize}

The configurations of $M$ (i.e., snapshots of the tape of $M$ during the computation) are described by strings $u (q,a) v B^k $, where $a\in \Gamma$, $uv \in \$ (\Gamma\setminus \{B\})^*$ such that $uavB^\omega$ is the the current content of the tape of the machine (where $B^\omega$ means a right-infinite string containing only blanks), $q$ is the current state of the machine, the tape-head scans the cell containing $a$, and $k\geq 1$. Note that $uavB^k$ is a prefix of the content of the tape, read left to right, including one or more $B$ symbols. Note also that the pairs $(q,a)$, for all $q\in Q$ and $a\in \Gamma$, will be part of the call-alphabet for the VPAs we construct in the rest of the proof. At this point, it is very important to remark that there might be more than one string describing the same configuration of the machine.

Now, given $C_1$ and $C_2$ two strings describing configurations of $M$, we say that there is a transition from $C_1$ to $C_2$, denoted $C_1\vdash C_2$, if $C_2$ is a string describing the configuration in which $M$ transitions from the configuration described by $C_1$ and, moreover, $|C_1|=|C_2|$. 

Indeed, as the strings encoding configurations contain an arbitrary large number of blank symbols, if $C$ describes a configuration of $M$ reachable in a finite, greater or equal to one number of steps by $M$ on the input $w$, then there exists a sequence of strings $C_0,\ldots,C_k$ describing configurations of $M$ such that $C_0$ describes the initial configuration of $M$ for the input $w$ and $C_0\vdash C_1\vdash \ldots \vdash C_k=C$ (meaning that $|C_i|=|C_j|$ for all $i,j\in \{0,\ldots,k\}$, as well). Intuitively, $C_0$ already contains all the blanks which will be scanned by the tape-head during the computation of $M$ on $w$, until $C$ is reached. 

A direct consequence of the observation made above is that if $C$ is a string which describes the final configuration of $M$ for the input $w\in L$, then there exists a sequence of strings $C_0,\ldots,C_k$ describing configurations of $M$ such that $C_0$ describes the initial configuration of $M$ for the input $w$ and $C_0\vdash C_1\vdash \ldots \vdash C_k=C$. This also means that $|C_i|=|C_j|$ for all $i,j\in \{0,\ldots,k\}$. So, an accepting computation of $M$ can be described by a sequence of strings encoding configurations, all having the same length. This length equality between the Turing machine's configurations in our setting is a point of the novelty in our proof, and it is crucial for the ''simulation" of its computations by visibly pushdown automata.

We now define an alphabet $\widetilde{\Delta}$ as the triple $(\Delta_c, \Delta_i, \Delta_r)$ of pairwise-disjoint alphabets, which stand for the call, internal and return alphabets, respectively, for the VPAs which we will construct from now on. We define $\Delta_c=\Gamma \cup \{\#,@\}\cup \{(q,a)\mid q\in Q, a\in \Gamma\}$ and $\Delta_r=\{a'\mid a\in \Delta_c\}$; that is, $\Delta_r$ consists in copies of the letters of the alphabet $\Delta_c$. Finally, $\Delta_i=\{ \blacksquare \}$. Let $f:\Delta_c^*\ra \Delta_r^*$ be the antimorphism defined by $f(a)=a'$ for all $a\in \Delta_c$ (here antimorphism means that $f(wu)=f(u)f(w)$ for all $u,w\in \Delta_c^*$).  

Let us now define the language $$L_1=\{@ C_1\#C_2\#\cdots \#C_k \# \blacksquare \#' C'_k\#'\cdots \#'C'_1 @' \}$$ where:
\begin{itemize}
    \item For $i\leq k-1$, $C_i$ is a configuration of $M$ and $C'_i=f(D_i)$, where $C_i\vdash D_i$. In other words, $C'_i$ is the image under $f$ of the string describing the configuration which follows the configuration described by $C_i$ in a computation of $M$. 
    \item $C_k$ describes a final configuration of $M$, and we have that $D_k=C_k$ and $C'_k=f(C_k)$. 
\end{itemize}

We can show that $L_1$ is accepted by a nondeterministic VPA $E$. This VPA functions according to the following algorithm.
\begin{enumerate}
    \item $E$ uses a stack, which is represented as a word and the top of the stack is the rightmost symbol of the respective word.
    \item In the first move, $E$ reads $@ $ and writes $@'$ on the stack.
    \item $E$ computes and writes the strings $D_1\#$, $\ldots$, $D_k \#$ on the stack (by pushing the symbols of these strings in order left to right), while reading the strings  $C_1\#,$ $\ldots$, $\#C_k \#$, respectively. Then $E$ checks whether $f(D_1\# \cdots D_k \#)= \#' C'_k\#'\cdots \#'C'_1$ by iteratively popping a symbol from the top of the stack if and only if it matches the current symbol read on the input tape. $E$ accepts the input if, when the input tape was completely read, the stack is empty (this can be checked using the fact that the last symbol popped must be $@'$).
    \item When reading any of the strings $C_i=u (q,a) v B^k$ (which basically occurs either at the very beginning of the computation or after reading a $\#$ symbol) the automaton $E$ works as follows:
    \begin{itemize}
                \item $E$ non-deterministically guesses if $C_i$ is the final configuration. If $C_i$ is the final configuration, then $E$ reads the symbols $d$ of $C_i$ and pushes $d'$ on the stack, until it reaches a symbol $(q,B)$. If $q$ is not final or $E$ met a $B$ symbol before reaching $(q,B)$, then $E$ goes into an error state. Otherwise, it writes $(q,B)'$ on the stack. It then reads the remaining $B$ symbols and $\#$ and pushes corresponding $B'$ symbols and $\#'$, respectively, in the stack. Then it checks if $\blacksquare$ follows on the input tape. If yes, it simply switches to the part where symbols are popped from the stack. If not, then $E$ goes to an error state. If $C_i$ is not a final configuration, then the following computation is implemented. 
                \item $E$ non-deterministically chooses the transition that $M$ makes in this configuration, and keeps track of this choice in the state.
                \item Assume first that the respective transition is $\delta(q,a)=(q_1,b,L)$. This means that $u$ is non-empty. In this case, $E$ reads the symbols $d\in \Gamma\setminus \{B\}$ of $C_i$ and pushes $d'$ on the stack, until it non-deterministically decides that it has reached the last symbol $c$ of $u$. It then reads $c$ and pushes $(q_1,c)'$ on the stack. The next symbol read on the tape should now be $(q,a)$; otherwise $E$ goes in an error state. If $E$ reads $(q,a)$, it pushes $b'$ on the stack, and it then continues reading the symbols $d$ of $C_i$ and pushing $d'$ on the stack, until it reaches $\#$. It then reads $\#$, pushes $\#'$ on the stack, and moves to the next configuration.
                \item  Assume now that the respective transition is $\delta(q,a)=(q_1,b,R)$. In this case, $E$ reads the symbols $d$ of $C_i$ and pushes $d'$ on the stack, until it reads the symbol $(q,a)$; otherwise, if it goes all the way to the first $B$ symbol without finding $(q,a)$, $E$ goes into an error state. Now, if $E$ reads $(q,a)$,  it pushes $b$ on the stack. Then, $E$ reads the next symbol on the tape. If this symbol is $\#$, then $E$ goes into an error state (intuitively, we cannot compute $D_i$ because it is longer than $C_i$). Otherwise, if this symbol is $d\neq \#$, then $E$ writes $(q_1,d)'$ on the stack. Further, $E$ continues reading the symbols $e$ of $C_i$ and pushing $e'$ on the stack, until it reaches $\#$. It then reads $\#$, pushes $\#'$ on the stack, and moves to the next configuration.
    \end{itemize}
\end{enumerate}
It is not hard to see that $E$ accepts $L_1$.

Intuitively, $E$ accepts those strings consisting in correctly matched (in a palindromic fashion) pairs of consecutive configurations of the Turing machine $M$. Very importantly, there is no connection between different pairs of matching configurations. This type of connection, ultimately leading to a way of expressing the valid computations of $M$, is something that we now need to achieve.

To the end, we define $L_2=\{w\blacksquare f(w)\mid x\in \Delta_c\}$. It is immediate that  $L_2$ can be accepted by a VPA. Similarly, all regular languages can be accepted by VPAs (as $\REGlang\subseteq \VPLlang$).

So, to achieve our goal, we define the following formula $\phi = (x\in \Sigma^*) \land (v\in \{B\}^+) \land (z\in (\Delta_r\setminus \{\#'\})^*) \land (@ (q_0,\$) x v \#y \blacksquare \#' z \# u @' \in L_1) \land (y\blacksquare \# u \in L_2)$. 

We claim that the language expressed by $x$ is $L$. 

Firstly, from $@(q_0,\$) x v \#y \blacksquare \#' z\#' u@' \in L_1$ we get that $$@ (q_0,\$) x v \#y \blacksquare \#' z \#' u@' = @ C_1\#C_2\#...\#C_k \# \blacksquare \#' C'_k\#'....\#'C'_1@',$$ for some configurations $C_1,\ldots, C_k$. Moreover, $C_1$ starts with $(q_0,\$)$ and $x$ is the string contained between $(q_0,\$)$ and the first occurrence of $B $ in $C_1$, so $C_1$ must be an initial configuration of $M$ from which $x$ extracts the input string. Also, $y=C_2\#\cdots C_k\#$. Finally, we obtain that $z=C'_k$ and $ u = C'_{k-1}\#' \cdots \#'C'_1$. 

Secondly, from $y\blacksquare \#' u \in L_2$, we get that $C'_{i-1}=f(C_i)$, for all $i\in \{2,\ldots,k\}$. This means that $C_{i-1}\vdash C_i$, for all $i\in \{2,\ldots,k\}$. Therefore, $C_1\vdash \ldots \vdash C_k$ is an accepting computation of $M$. In conclusion, the word expressed by $x$ is in $L$.

The converse implication is immediate. If $w\in L$, then there exists a sequence of strings $C_0,\ldots,C_k$ (all having the same length) describing configurations of $M$ such that $C_0$ describes an initial configuration of $M$ for the input $w$, $C_k$ describes a final configuration, and $C_0\vdash C_1\vdash \ldots \vdash C_k$. From this, we can easily derive an assignment of the string variables $v,z,y,u$ with $x=w$ for which $\phi$ is satisfiable.

Our claim now follows, and we have shown that any recursively enumerable language can be expressed in $\VPL\+\CON$.
\end{proof}

%\begin{corollary}
Consequently, the class of languages expressible in each of $\VPL \+ \CON \+ \LEN$, $\WE \+ \VPL$, $\WE \+ \VPL \+ \LEN$, $\CF \+ \CON$, $\CF \+ \CON \+ \LEN$, $\WE \+ \CF$, and $\WE \+ \CF \+ \LEN$ is the class of recursively enumerable languages $\RE$. Thus, all these theories are equivalent under $\sim$.\looseness=-1

We therefore get a natural hierarchy of theories which extend the syntax and expressive power of $\WE$, $\WE \+ \REG$, and $\WE \+ \LEN$. Next, we show some cases where this hierarchy does not collapse by providing some inexpressibility results.

We need the following technical lemma, which non-trivially extends a similar one from~\cite{karhumaki2000} to accommodate the addition of other types of constraints to word equations.

\begin{lemma}\label{lem:WELENSEP}\label{lem:WEREGSEP}
Let $\mathfrak{F}$ be a property on words defining a synchronising factorisation. 

Let $\mathfrak{T}$ be a theory belonging to $\WE \+ \LEN$ or to $\WE \+ \REG$, and let $f$ be a formula from $\mathfrak{T}$ and $x$ a variable occurring in $f$. Suppose that $w$ belongs to the language $L$ expressed by $x$ in $f$ and let $w = w_1 w_2 \ldots w_n$ be its $\mathfrak{F}$-factorisation. Then, the following holds:\\
(1) if $\mathfrak{T}$ belongs to $\WE \+ \LEN$, there exists $d \in \mathbb{N}$ such that if the number of distinct factors $w_i$ is greater than $d$, then there exists at least one $i$ such that, for every word $u$ with $|u| = |w_i|$, the word $w'$ obtained by replacing each occurrence of $w_i$ in $w$ with $u$ also belongs to $L$; \\
(2) if $\mathfrak{T}$ belongs to $\WE \+ \REG$, there exist $d,e \in \mathbb{N}$ %(depending on $f$) 
such that if the number of distinct factors $w_i$ with $|w_i| > e$ is greater than $d$, then there exists at least one $i$ and a word $u$ with $|u| < |w_i|$ such that the word $w'$ obtained by replacing each occurrence of $w_i$ in $w$ with $u$ also belongs to~$L$.\looseness=-1
\end{lemma}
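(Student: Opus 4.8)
\textbf{Proof plan.}
In both cases the argument proceeds in three stages: first eliminate the Boolean structure, then isolate a bounded family of ``bad'' $\mathfrak{F}$-factors outside of which some factor can be substituted uniformly throughout a solution, and finally carry out the substitution, preserving length constraints in the $\WE\+\LEN$ case and finite-automaton runs in the $\WE\+\REG$ case. \emph{Stage 1 (reduction to positive conjunctions).} Using standard manipulations (the ones underlying the normal forms for Boolean combinations of word equations, cf.~\cite{karhumaki2000,lot:alg}), the language expressed by $x$ in $f$ is a finite union of languages, each expressed by $x$ in a conjunction of \emph{positive} atoms: word equations together with either positive length constraints (for part~(1)) or positive regular-membership constraints (for part~(2)), over a larger but still finite set of variables. (A negated word equation $\alpha\neq\beta$ becomes a disjunction of systems of positive word equations with fresh existentially quantified variables witnessing a length difference or a mismatching position; a negated length or regular constraint is treated similarly, respectively by complementation.) Since the bound obtained below for one such conjunction depends only on that conjunction, we may finally take $d$ (and $e$) to be the maximum of these bounds over the finitely many disjuncts; disjuncts in which $x$ does not occur express $\emptyset$ or $\Sigma^{*}$ and cause no difficulty. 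So assume henceforth that $f$ is a conjunction of positive atoms and that $h$ is a satisfying assignment with $h(x)=w$.

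\emph{Stage 2 (the bad set).} Fix the $\mathfrak{F}$-factorisation of $h(z)$ for each variable $z$ and of the common value $h(\alpha)=h(\beta)$ for each word equation $\alpha=\beta$ of $f$. In such an equation the images of the individual symbols of $\alpha$ (respectively $\beta$) partition $h(\alpha)$ into finitely many \emph{blocks}. Applying the synchronising property of $\mathfrak{F}$ (with parameters $l,r$) to each block $h(z)$, which is a factor of $h(\alpha)$, shows that away from the first $l$ and last $r$ $\mathfrak{F}$-factors of $h(z)$, and away from a window of at most $l+r$ $\mathfrak{F}$-factors of $h(\alpha)$ at each end of the block, the $\mathfrak{F}$-factorisation of $h(\alpha)$ agrees with that of $h(z)$. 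Let $\mathsf{Bad}$ be the set of words consisting of: all $\mathfrak{F}$-factors of the constant words appearing in equations of $f$; the first $l$ and last $r$ $\mathfrak{F}$-factors of $h(z)$ for every variable $z$; and, for every symbol-occurrence on either side of every equation, the $O(l+r)$ $\mathfrak{F}$-factors of the associated composite word near the two ends of the corresponding block. Then $|\mathsf{Bad}|$ is at most a constant $d$ depending only on $f$ and on $l,r$; in part~(2) we additionally let $e$ be large enough that any word of length greater than $e$ has two prefixes inducing the same transition matrix in every NFA occurring in $f$ simultaneously (e.g.\ the product of the sizes of the transition monoids of those NFAs). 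The combinatorial heart of the proof is the following claim: if $F\notin\mathsf{Bad}$, then every $\mathfrak{F}$-factor occurrence of $F$ --- in $w$, in any $h(z)$, or in any composite word $h(\alpha)$ --- avoids the first $l$ and last $r$ positions of the relevant factorisation and lies strictly inside a single block, and such occurrences are matched consistently by the blocks along each equation; consequently, replacing block by block every $\mathfrak{F}$-factor equal to $F$ by a fixed word $u$ is compatible with all word equations.

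\emph{Stage 3 (substitution).} By hypothesis there are more than $d$ distinct factors $w_i$ in part~(1) (respectively, more than $d$ distinct factors $w_i$ with $|w_i|>e$ in part~(2)), so we may choose such a factor $F$ with $F\notin\mathsf{Bad}$. In part~(1), for any $u$ with $|u|=|F|$ let $h'$ be the morphism obtained from $h$ by replacing, inside every $h(z)$, each $\mathfrak{F}$-factor equal to $F$ by $u$; by the claim $h'$ satisfies every word equation, it preserves all lengths (hence all length constraints) because $|u|=|F|$, and $h'(x)$ is exactly the word $w'$ obtained from $w$ by the prescribed replacement, so $w'\in L$; this proves~(1). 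In part~(2), since $|F|>e$ we may delete a factor of $F$ to obtain $u$ with $|u|<|F|$ that induces the same transition matrix as $F$ in every NFA of $f$; defining $h'$ as above, the word equations are preserved by the same consistency argument (which refers only to the \emph{positions}, not the lengths, of the $F$-factors), and each regular-membership constraint is preserved because replacing $F$ by $u$ leaves the run relation of every relevant NFA on each $h'(z)$, hence on every term, unchanged; thus $w'=h'(x)\in L$, proving~(2).

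\textbf{The main obstacle} is the combinatorial claim of Stage~2: promoting the synchronising property --- which concerns one word and one of its factors --- to a \emph{global} rigidity of the $\mathfrak{F}$-``scaffolding'' simultaneously across all variable images and both sides of every equation, so that a factor avoiding the bounded set $\mathsf{Bad}$ can be substituted everywhere without breaking an equation. Getting the bookkeeping right --- block boundaries, ``fractional'' $\mathfrak{F}$-factors straddling two blocks, and short variable images (whose factorisation has at most $l+r$ parts, making the synchronising property vacuous, so that all their $\mathfrak{F}$-factors must be placed in $\mathsf{Bad}$) --- is the delicate part. By contrast, the reduction to positive conjunctions, the preservation of length constraints, and the preservation of regular constraints via the pumping/transition-monoid argument are routine; it is precisely in supporting length-preserving and automaton-respecting substitutions that the statement goes beyond the corresponding lemma of~\cite{karhumaki2000}.
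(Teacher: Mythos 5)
Your proposal is correct and follows essentially the same route as the paper's proof: reduce via DNF to a finite union of single word equations with positive length (resp.\ regular) side constraints, invoke the synchronising-factorisation substitution property for a factor outside a bounded ``bad'' set (which the paper imports directly from the proof of Theorem~16 of~\cite{karhumaki2000}, where you instead sketch its re-derivation), and then preserve the side constraints by taking $|u|=|w_i|$ in part~(1) or by a pigeonhole on state-transformation behaviours in part~(2). Your transition-monoid bound for $e$ is the same argument as the paper's functions $g_{p,j}:Q_j\to Q_j$ on the DFA state sets, so there is no substantive difference in approach.
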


\begin{proof}$ $

{\bf Proof of Part (1):}
Firstly, we write the formula $f$ in Disjunctive Normal Form (DNF). We can then separate each clause into conjunctions of word equation literals and length constraint literals. By canonical constructions (see e.g.~\cite{karhumaki2000}), we can replace the word equation literals with a single word equation atom (i.e. a positive occurrence of a word equation) without affecting the language expressed by $x$. Thus the language expressed by $x$ is a finite union of languages, each given by the solutions to a single word equation which adhere to the length constraint literals (projected onto the variable $x$).

Suppose that $w \in L$. Then we must have that $w$ is the value of $x$ of some solution to one of the word equations corresponding to one of the DNF clauses as described above.

The proof of Theorem~16 in~\cite{karhumaki2000} directly establishes the fact that under the conditions of the lemma, we may replace all occurrences of some $w_i$ consistently with any other word $u$ and still have a solution to the word equation. By ensuring that we swap $w_i$ for a word $u$ of the same length, we ensure that the new solution satisfies the length constraints whenever the old one does. It follows that the new word also belongs to $L$ so the statement of the lemma holds.

{\bf Proof of Part (2):} This is the more complicated part of this proof. Just like in the previous case, we start by writing the formula $f$ in Disjunctive Normal Form (DNF). We can then separate each clause into conjunctions of word equation literals and regular language membership literals. By closure properties of regular languages, we can remove negations of regular language membership constraints and by canonical constructions (see e.g.~\cite{karhumaki2000}), we can replace the word equation literals with a single word equation atom (i.e. a positive occurrence of a word equation) without affecting the language expressed by $x$. Thus the language expressed by $x$ is a finite union of languages, each given by the solutions to a single word equation which adhere to the (positive) regular language membership constraints (and projected onto the variable $x$).

Suppose that $w \in L$ satisfies the conditions of the lemma. Then we must have that $w$ is the value of $x$ of some solution to one of the word equations corresponding to one of the DNF clauses as described above. Let $x_1,x_2,\ldots, x_m$ be the variables occurring in that clause, and for each $j, 1\leq j \leq m$, let
$A_j$ be the minimal DFAs describing the intersection of all regular language membership constraints acting on the variable $x_j$ and let $Q_j$ be the set of states of $A_j$. For $1\leq j \leq m$, let $w_j = w_{j,1} w_{j,2} \ldots w_{j,k_j}$ be the $\mathfrak{F}$-factorisation of the value of $x_j$ in the aforementioned solution from which we get $w$.

Now, the proof of Theorem~16 in~\cite{karhumaki2000} directly establishes the fact that under the conditions of the lemma, we may replace all occurrences of some $w_i$ with $|w_i| > e$ consistently with any other word $u$ and still have a solution to the word equation. We need to ensure that this replacement also respects the regular language membership constraints, and this is the novel part of this proof. In particular, while in part (1) a similar replacement was done while preserving the length of the replaced strings, here we need to change the length of the solution. 

To see how this can be achieved, for each position $p$ of $w_i$ (that is for each number $p$ with $1\leq p \leq |w_i|$), and for each $j$, $1\leq j \leq m$ we associate a function $g_{p,j} : Q_j \to Q_j$ such that $g_{p,j}(q) = q'$ if the automaton $A_j$ finishes in state $q'$ when reading the first $p$ letters of $w_i$ starting in state $q$. 

Note that if $g_{p,j}(q) = g_{p',j}(q)$, and when the automaton reads $w_j$ it begins reading an occurrence of $w_i$ in state $q$, then the factor of that occurrence of $w_i$ between positions $p$ and $p'$ can be removed without affecting acceptance. Thus, if $g_{p,j} = g_{p',j}$, then the factor between positions $p$ and $p'$ can be removed from all occurrences of $w_i$ in $w_j$ without affecting the acceptance w.r.t. $A_j$. Finally, if there exist $p,p'$ such that $g_{p,j} = g_{p'j}$ for all $j, 1\leq j \leq m$, then the factor between positions $p$ and $p'$ of $w_i$ can be removed from all occurrences of $w_i$ in all words $w_j$ without affecting the satisfaction of any of the regular membership constraints. 

Once the formula $f$ is fixed, the cardinalities of the sets of states $Q_j$ are also fixed. Moreover, the number of possible values for $j \leq m$ is fixed. Thus the number of distinct functions $g_{p,j}$ is bounded and so is the number of combinations of functions $(g_{p,j})_{1 \leq j \leq m}$. Let $e$ be the number of possible combinations of these functions. Then, if $|w_i| > e$, we must necessarily be able to find $p,p'$ with $p \not= p'$ such that $g_{p,j} = g_{p',j}$ for all $j$. By taking $u$ to be the word obtained by removing the factor between positions $p$ and $p'$ in $w_i$, we therefore obtain $|u| < |w_i|$ and still satisfy all regular language membership constraints as required.
%By ensuring that we swap $w_i$ for a word $u$ of the same length, we ensure that the new solution satisfies the length constraints whenever the old one does. Consequently, the statement of the lemma holds. 
\end{proof}

Using Lemma \ref{lem:WELENSEP}(1), we can show that there is a language expressible in $\WE \+ \REG$ (thus, in $\WE \+ \REG \+ \LEN$ as well) which is not expressible in $\WE \+ \LEN$. Similarly, using Lemma \ref{lem:WEREGSEP}(2) we can show that there is a language expressible in $\WE \+ \LEN$ (and thus in $\WE \+ \REG \+ \LEN$) which is not expressible in $\WE \+ \REG$.

\begin{lemma}\label{lem:subalphabetexample}\label{lem:WELENnotWEREG}
Let $a,b,c$ be distinct letters. (1) The language $L = \{ v c \mid v \in \{a,b\}^*\}$ is expressible in $\WE \+ \REG$ but not expressible in  $\WE \+ \LEN$. (2) The language $L = \{ u c v \mid u,v \in \{a,b,c\}^* \land |u| = |v|\}$ is expressible in $\WE \+ \LEN$ but not expressible in $\WE \+ \REG$.
\end{lemma}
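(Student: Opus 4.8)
The plan is to prove each of the two claims by (i) exhibiting an explicit formula witnessing expressibility in the ``positive'' theory, and (ii) applying the appropriate part of Lemma~\ref{lem:WELENSEP} together with a suitable synchronising factorisation to rule out expressibility in the ``negative'' theory.

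\textbf{Part (1).} The easy direction is that $L = \{vc \mid v \in \{a,b\}^*\}$ lies in $\WE \+ \REG$: the single formula $x \in \{a,b\}^*\{c\}$ already does it, using only a regular membership atom. For the hard direction, suppose towards a contradiction that $L$ is expressed by some variable $x$ in a formula $f$ from a theory in $\WE \+ \LEN$. I would take $\mathfrak{F}$ to be the ``single-letter block'' factorisation (split a word into blocks of a single letter), which the remark after the definition notes is synchronising. Now consider, for the constant $d$ supplied by Lemma~\ref{lem:WELENSEP}(1), a word $w = w_1 \cdots w_n \in L$ whose single-letter factorisation uses more than $d$ distinct blocks. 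Concretely, take $w = a^{1} b^{1} a^{2} b^{2} \cdots a^{N} b^{N} c$ for $N$ large enough that the blocks $a, b, a^2, b^2, \dots, a^N, b^N$ (together with the final block $c$) give more than $d$ distinct factors. By Lemma~\ref{lem:WELENSEP}(1), some factor $w_i$ can be replaced, in all its occurrences in $w$, by \emph{any} word $u$ of the same length while staying in $L$. If $w_i$ is one of the $a^k$ or $b^k$ blocks, replace it by $u = c^{|w_i|}$; since $|w_i| \ge 1$, the resulting word $w'$ contains a $c$ somewhere other than the last position, so $w' \notin L$ --- contradiction. If $w_i$ is the final block $c$ (length $1$), replace it by $u = a$; then $w'$ ends in $a$, so again $w' \notin L$. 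Either way we contradict $w' \in L$, so $L$ is not expressible in $\WE \+ \LEN$.

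\textbf{Part (2).} For $L = \{ucv \mid u,v \in \{a,b,c\}^* \land |u| = |v|\}$, expressibility in $\WE \+ \LEN$ is witnessed by a formula such as $x = u \cdot c \cdot v \land |u| = |v|$, using one word equation and one length constraint (here $x$ is the exposed variable and $u,v$ are auxiliary). For the hard direction, suppose $L$ is expressed by $x$ in a formula $f$ from a theory in $\WE \+ \REG$, and let $d, e$ be the constants from Lemma~\ref{lem:WELENSEP}(2). Again take $\mathfrak{F}$ to be the single-letter block factorisation. The key point is to pick $w \in L$ so that (a) it has more than $d$ distinct blocks of length exceeding $e$, and (b) \emph{shortening any single block of $w$, consistently in all its occurrences, by any positive amount destroys membership in $L$}. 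A word of the form
\[
w = a^{e+1} \, b^{e+2} \, a^{e+3} \cdots \, c \, \cdots \, b^{2(e+d+1)} \, a^{2(e+d+1)+1},
\]
with all block-lengths distinct, exceeding $e$, and arranged so that each block occurs exactly once, the unique central $c$ sits exactly at the midpoint, and each block has a distinct length, works: since every block occurs exactly once and all lengths are distinct, removing a nonempty factor from exactly one block changes the total length by a nonzero amount that does not preserve the ``the single $c$ is in the middle'' condition (one must be careful that the removed factor is not itself the central $c$; by making the central $c$-block length $1$ and $> e$ impossible, we guarantee Lemma~\ref{lem:WELENSEP}(2) only ever selects a long block, i.e.\ one of the $a^k$ or $b^k$). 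Then Lemma~\ref{lem:WELENSEP}(2) forces some such $w' \in L$, contradicting that $w'$ no longer has its unique $c$ at the centre.

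\textbf{Main obstacle.} The delicate point in both parts, and especially Part (2), is the bookkeeping needed to ensure the word $w$ we choose genuinely satisfies the hypotheses of Lemma~\ref{lem:WELENSEP} (enough distinct long blocks) \emph{and} that every replacement the lemma could license actually leaves $L$ --- in particular, arranging block lengths to be pairwise distinct and all $> e$, keeping each block occurring exactly once so ``consistent replacement'' means replacing a single occurrence, and ensuring the central delimiter $c$ is never the block chosen for shortening (handled by giving the delimiter its own short block that falls below the threshold $e$). Once the combinatorial parameters are set up correctly, the contradictions are immediate; the work is entirely in choosing $w$ with the right structure relative to the (unknown but fixed) constants $d$ and $e$.
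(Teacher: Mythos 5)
Your proposal is correct and follows essentially the same route as the paper: explicit formulas for the positive directions, and for the negative directions an application of Lemma~\ref{lem:WELENSEP} with the single-letter-block synchronising factorisation to a word with many pairwise-distinct blocks (all long and occurring once, balanced around a unique central $c$ in Part~(2)), exactly as in the paper's proof. The only point worth tightening is that in Part~(2) the lemma permits replacing a block by an \emph{arbitrary} shorter word $u$ (possibly containing $c$'s), not just deleting a factor; the argument still closes because the shifted midpoint always lands in the $c$-free half of the word, a detail the paper glosses over in the same way.
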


\begin{proof}$ $

{\bf Proof of Part (1):}
Suppose to the contrary that $L$ is expressible in a theory from $\WE \+ \LEN$. Let $\Sigma \supseteq \{a,b,c\}$ be the underlying alphabet of that theory. Let $f$ be a formula from that theory and $x$ a variable in $f$ such that $L$ is the language expressed by $x$ in $f$. Let $d$ be the constant from Lemma~\ref{lem:WELENSEP}.

Now, clearly $w= aba^2ba^3b \ldots a^d b c$ belongs to the language $L$. Moreover, recall that the factorisation $\mathfrak{F}$ of words into blocks of single letters is a synchronising factorisation. Now, the $\mathfrak{F}$-factorisation of $w$ consists of at least $d + 2$ distinct factors (namely $a^i$ for $1\leq i \leq d$, $b$ and $c$), so by Lemma~\ref{lem:WELENSEP} we may swap all occurrences of at least one of these factors for any word of the same length and we will get another word in the language $L$. However, whichever block of letters we swap, we can always choose a word of equal length which violates membership in $L$. If we have a block consisting of $a$'s or of a $b$, we can swap it for a word in $c*$ and the resulting word will not be in $L$. Likewise, if we swap the block $c$, we could swap it e.g. for an $a$ and again the resulting word will not be in $L$. In all cases we get a contradiction, so $L$ cannot be expressible in $\mathfrak{T}$ as required.

{\bf Proof of Part (2).}
The fact that $L$ is expressible in $\WE \+ \LEN$ is straightforward. To see that it is not expressible in $\WE \+ \REG$, suppose to the contrary that it is. Then there exists a theory $\mathfrak{T}$ from $\WE \+ \REG$ containing a formula $f$ such that $L$ is expressed by a variable $x$ in $f$. Let $d,e$ be the constants from Lemma~\ref{lem:WEREGSEP}. Then there exist pairwise distinct numbers $p_1,p_2, \ldots, p_d, q_1,q_2,\ldots, q_d > e$ such that $\sum p_i = \sum q_i$ and thus such that $w = a^{p_1}b^{p_1} a^{p_2}b^{p_2} \ldots a^{p_d} b^{p_d} c a^{q_1} b^{q_1} a^{q_2} b^{q_2} \ldots a^{q_d} b^{q_d}$ belongs to $w$.

Recall that the factorisation $\mathfrak{F}$ of words into blocks of single letters is a synchronising factorisation. Now, the $\mathfrak{F}$-factorisation of $w$ consists of $4d $ distinct factors having length greater than $e$ (namely $a^{p+i}, b^{p_i}, a^{q_i}, b^{q_i}$ for $1\leq i \leq d$), so by Lemma~\ref{lem:WEREGSEP} we may swap all occurrences of at least one block of letters for a strictly shorter block. However, since each block of letters occurs only once (by the fact that the $p_i$s and $q_i$s are all pairwise distinct), this would result in a word for which one side of the central $c$ is shorter than the other, and thus not belonging to $L$, a contradiction. Thus $L$ cannot be expressible as required.
\end{proof}

%To show this, we need a counterpart of Lemma~\ref{lem:WELENSEP}.

%\begin{lemma}\label{lem:WEREGSEP}
%Let $\mathfrak{F}$ be a property on words defining a synchronising factorisation. Let $\mathfrak{T}$ be a theory belonging to $\WE \+ \REG$, and let $f$ be a formula from $\mathfrak{T}$ and $x$ a variable occurring in $f$. Suppose that $w$ belongs to the language $L$ expressed by $x$ in $f$ and let $w = w_1 w_2 \ldots w_n$ be its $\mathfrak{F}$-factorisation.
%Then there exist $d,e \in \mathbb{N}$ (depending on $f$) such that if the number of distinct factors $w_i$ satisfying $|w_i| > e$ is greater than $d$, then there exists at least one $i$ and a word $u$ with $|u| < |w_i|$ such that the word $w'$ obtained by replacing each occurrence of $w_i$ in $w$ with $u$ also belongs to $L$.
%\end{lemma}

%\begin{lemma}\label{lem:WELENnotWEREG}
%Let $a,b,c$ be distinct letters. Then the language $L = \{ u c v \mid u,v \in \{a,b,c\}^* \land |u| = |v|\}$ is expressible in $\WE \+ \LEN$ and not $\WE \+ \REG$.
%\end{lemma}

Summarising, we get that $\WE \+ \LEN, \WE \+ \REG \,\prec \, \WE \+ \REG \+ \LEN$, while the classes of languages expressible in $\WE \+ \LEN$ and $\WE \+ \REG$ are incomparable.

Next, we turn our attention to the remaining  theories which do not extend the expressive power of word equations. Since we have already seen that concatenation together with visibly pushdown (or deterministic context-free) membership constraints is enough to model recursively enumerable languages, and therefore word equations, the remaining theories consist of language membership without concatenation (but possibly with length constraints) and all combinations consisting of regular language membership constraints without word equations (so including either concatenation, length constraints, both, or neither). 

In the following lemma, we state another important result. For this, let ${\mathcal C}$ be a class of formal languages which contains $\REGlang$, is contained in $\GCFlang$, and is effectively closed under intersection and complement. We assume that the languages of ${\mathcal C}$ are specified by an accepting or generating mechanism which allows the construction of a context-free grammar generating that language. Let ${\mathcal C}_t$ be the theory defined as in Section~\ref{sec:preliminaries} which allows only language membership predicates (of type A1) for the class of languages ${\mathcal C}$. Let ${\mathcal C}_t\+\LEN$ be the theory which also allows length constraints. In this framework, the following holds.
\begin{lemma}\label{lemma:cftheories}
${\mathcal C}_t \+ \LEN \sim {\mathcal C}_t$.
\end{lemma}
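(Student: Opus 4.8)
The plan is to show the two inclusions $\mathcal{C}_t \preceq \mathcal{C}_t \+ \LEN$ and $\mathcal{C}_t \+ \LEN \preceq \mathcal{C}_t$; the first is trivial since $\mathcal{C}_t$ is syntactically a sub-theory of $\mathcal{C}_t \+ \LEN$, so the whole work is in the second inclusion. Given a formula $f \in \mathcal{C}_t \+ \LEN$ and a variable $x$ occurring in it, I want to produce a formula $f' \in \mathcal{C}_t$ and a variable expressing the same language. First I would put $f$ in disjunctive normal form and observe that the language expressed by $x$ is the finite union of the languages expressed by $x$ in the individual clauses; since $\mathcal{C}_t$-expressible languages will be closed under union (one can take a disjoint-alphabet-tagged disjunction, or simply note $\REGlang \subseteq \mathcal{C}$ and $\mathcal{C}$ is effectively closed under the Boolean operations needed, or handle the union directly with a fresh variable and disjunction inside $\mathcal{C}_t$ itself), it suffices to treat a single clause. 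So fix a conjunction of literals of the form $s \in L$, $s \notin L$ (with $L \in \mathcal{C}$, $s$ a basic string term, i.e. a variable or a constant word) together with length literals $\ell_1 = \ell_2$ and their negations over $\mathcal{T}_{arith}^\Sigma$.

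The key step is to eliminate the length literals. Using effective closure of $\mathcal{C}$ under complement and intersection, I first replace, for each variable $y$ occurring in the clause, the conjunction of all membership constraints acting on $y$ by a single constraint $y \in L_y$ with $L_y \in \mathcal{C}$ (negated membership $y \notin L$ becomes $y \in \overline{L}$; constants appearing in string terms only contribute fixed lengths and known truth values, so they can be dealt with separately or absorbed). Now the language expressed by $x$ in the clause is $\{w \in L_x \mid \exists (w_y)_{y \neq x},\ w_y \in L_y \text{ for all } y,\ \text{and the length system holds with } |x| = |w|,\ |y| = |w_y|\}$. The crucial observation is that the set of length-vectors $(|w_y|)_y$ realisable by a tuple with $w_y \in L_y$ is, by Parikh's theorem applied to the context-free (indeed the $L_y$ are in $\GCFlang$, which is where the hypothesis that $\mathcal{C}$-languages admit a constructible context-free grammar is used) language $L_y$, a semilinear — hence Presburger-definable — subset of $\mathbb{N}_0$; intersecting with the (Presburger) solution set of the length system and projecting away all coordinates except $|x|$ yields a semilinear set $S \subseteq \mathbb{N}_0$. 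Then the clause's language for $x$ is exactly $L_x \cap \{w : |w| \in S\}$. Since $S$ is semilinear, the language $\{w \in \Sigma^* : |w| \in S\}$ is regular, hence in $\REGlang \subseteq \mathcal{C}$, and one can compute it effectively; so by effective closure of $\mathcal{C}$ under intersection the clause's language is in $\mathcal{C}$ and expressible by the single atom $x \in (L_x \cap \{w : |w| \in S\})$ in $\mathcal{C}_t$. Taking the disjunction over all clauses gives $f'$.

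The step I expect to be the main obstacle — or at least the one needing the most care — is the passage from the length literals over all variables to a single length-predicate on $|x|$: one must argue that the lengths of the non-$x$ variables can be chosen \emph{independently} within their respective semilinear length-sets (which is fine because the only coupling between variables is through the membership constraints, already separated per-variable, and the length equations, which are handled inside Presburger arithmetic), and one must be careful about constant string terms contributing both fixed lengths into the arithmetic and fixed truth values into the membership atoms. A secondary point to get right is the handling of disjunction at the level of $\mathcal{C}_t$: strictly speaking one should confirm that $\mathcal{C}_t$-expressibility is closed under finite union — this follows because a disjunction $\bigvee_i (x \in L_i)$ is already a formula of $\mathcal{C}_t$ expressing $\bigcup_i L_i$ via $x$, so no extra closure assumption on $\mathcal{C}$ is needed here. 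Everything else (DNF, removing negated membership via complement, merging via intersection, Parikh/semilinearity, regularity of length-constrained languages) is routine once the framework is set up, and the construction is effective throughout, matching the ``$\sim$'' as used elsewhere in the paper.
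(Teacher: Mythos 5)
Your proposal is correct and follows essentially the same route as the paper's proof: DNF decomposition, per-variable merging of membership constraints via effective closure under intersection and complement, semilinearity of the length sets of the merged languages via Parikh's theorem, and elimination of the arithmetic by intersecting $L_x$ with a regular language of the form $\{w : |w| \in S\}$ for a computable semilinear $S$. The only difference is one of machinery rather than substance: where you obtain $S$ abstractly by projecting a Presburger-definable set onto the $|x|$-coordinate, the paper computes the same set concretely via the Chrobak normal form of a unary automaton and an explicit parametric description of the integer solutions of the resulting linear system, and like you it correctly identifies as the delicate point that the non-$x$ variables are coupled only through their lengths, so witnesses for them can be re-chosen independently within their membership languages.
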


\begin{proof}
Let $f$ be a formula in ${\mathcal C}_t \+ \LEN$ and let $\alpha_1,\ldots,\alpha_r$ be the atoms appearing in this formula. We can compute all the assignments of truth values to the atoms $\alpha_1,\ldots,\alpha_r$ which make $f$ true. That is, each such assignment $\sigma$ simply maps each $\alpha_i$ to a true or false, such that at the end $f$ evaluates to true according to the values assigned by $\sigma$ to its atoms. Now, for each assignment $\sigma$ as above, we construct a formula $f_\sigma$ as the conjunction of the formulae $\beta_\ell$, for $\ell\in\{1,\ldots,r\}$, where $\beta_\ell=\alpha_\ell$ if $\alpha_\ell$ is true in $\sigma$ and $\beta_\ell=\neg \alpha_\ell$, otherwise.  Moreover, in $f_\sigma$ we replace each language constraint $\neg{(x\in L)}$ by the constraint ${x\in L^C}$, where $L^C$ is the complement of the language $L$. Let $f'$ be the disjunction of all the fomulae $f_{\sigma}$. It is not hard to see that $f'$ is equivalent to $f$, i.e., the assignments of the variables which satisfy $f$ are exactly the same as the assignments of the variables which satisfy $f'$. 

Consider now a formula $f_\sigma$. Let $f_1$ be the sub-formula of $f_\sigma$ consisting in all the arithmetic constraints. 

Let $x$ be a string variable occurring in $f_\sigma$. Let $x\in L^x_i$, for $i\in \{1,\ldots,k_x\}$ and some positive integer $k_x$, be all the language membership constraints involving the variable $x$ occurring in $f_\sigma$. According to our hypotheses regarding the class of languages ${\mathcal C}$, we can compute a representation of the language $L_x=\cap_{i=1,k_x} L^x_i$, and the context-free grammar generating $L_x$.

By Parikh's theorem~\cite{Parikh1966}, we can compute a regular language $R_x$ (and a DFA $A_x$ accepting it) such that the set of Parikh vectors of the words in $R_x$ is equal to the set of Parikh vectors of the words in $L_x$. In particular, this means that $L_x$ contains a word of length $\ell$ if and only if $R_x$ contains a word of length $\ell$, for all $\ell\geq 0$.

Further, let $B_x$ be the unary NFA with $m_x$ states obtained by re-labelling all transitions in $A_x$ with a single letter $a$. It is clear that the paths of $A_x$ correspond bijectively to the paths of $B_x$. Further, we compute the NFA $B'_x$, the Chrobak normal form of the unary automata $B_x$ (see~\cite{Chrobak86,Gawrychowski11}).
As such, $B'_x$ consists of a path of length $O(m_x^2)$ ending in a state $q$, followed by a single nondeterministic choice from $q$ to a set of disjoint cycles of lengths $c_1, c_2, \ldots, c_h$, with $c_i \leq m_x$ for all $i\leq h$. It follows that there exists a finite set of arithmetic progressions which describe exactly the length of accepted words for $B'_x$. Let $\ell^x_{p,i} + \ell^x_{c,i}\nat $, for $i\in \{1,\ldots,d_x\}$, be these progressions corresponding to $B'_x$. Note that for each of these lengths (and, accordingly, for each of the accepting paths of $B'_x$), there exists a word of corresponding length accepted by $B_x$, so there also exists a word of this length in $R_x$, and, ultimately, a word of this length in $L_x$. 

Consequently, for each string variable $x$ we now get a series of new arithmetic constraints on the length of each variable $x$: in a satisfying assignment of $x$, we must have that $|x|$ has the form $\ell^x_{p,i}+\alpha^x_{i} \ell^x_{c,i}$, for some $i\leq d_x$; here $\alpha^x_{i}$ is a positive integer variable. So, we define a new formula $f_{x}$ of length constraints: $f_x=\lor_{i\leq d_x }|x|=\ell^x_{p,i}+\alpha^x_{i} \ell^x_{c,i}$. 

We now define $f_{ar} = f_1 \land (\land_{x\mbox{ string variable}} f_x)$. 

At this point, it is worth noting that a satisfying assignment for $f_{ar}$, which assigns to $|x|$ the integer $\ell_x$, induces an assignment for the lengths of all the string variables occurring in $f_\sigma$. Moreover, due to the construction of the formulae $f_x$, we have that for each such variable $x$ there exists (at least) an accepting path of length $\ell_x$ in $B'_x$. As explained above, this means that for each such variable $x$ there exists (at least) a word of length $\ell_x$ in $L_x$. Thus, a satisfying assignment for $f_{ar}$ induces (at least) one assignment of both the string and the integer variables occurring in $f_\sigma$ which satisfies $f_\sigma$. 

Further, by standard methods, the arithmetic formula $f_{ar}$ can be transformed in a linear system $A\mathbf{y}\geq \mathbf{b}$, where $A$ is a matrix of integers, $\mathbf{y}$ is the vector of integer variables (including the lengths of string variables), and $\mathbf{b}$ is a vector of integers. By the results in~\cite{ChubarovV05}, we get that for a system of inequalities $A\mathbf{y}\geq \mathbf{b}$ we can compute two finite sets of vectors, $H_0$ and $H_1$, such that each integral solution $v$ of the system (i.e., assignment of the ${\mathbf{y}}\gets v$ of the integer and length variables which satisfies $A\mathbf{y}\geq \mathbf{b}$) can be expressed as the sum $v_0 + v_1$ where $v_0$ is a linear combination (with integral non-negative coefficients) of vectors from $H_0$, and $v_1$ is a vector from $H_1$. Therefore, the length $|x|$ of each string variable $x$ has the form $\lambda_{1} a_{x,1} + ... + \lambda_{d} a_{x,d} + b_x$, where $d$ is the number of vectors in $H_0$, the integers $a_{x,j}$ are determined by the vectors of $H_0$, and $b_x$ is determined by the vectors of $H_1$. The parameters $\lambda_{j}$ are positive integers. Moreover, we can assume that all $a_{x,j}$ are positive; otherwise, if some $a_{x,t}$ would be negative, it is enough to note that when $\lambda_{t}$ grows to infinity, and all other $\lambda_{x,j}$ are set to $1$, the length $|x|$ would become negative, a contradiction. 

Now, it is not hard to note 
that the set of strings $w$ whose length has the form $\lambda_{1} a_{x,1} + ... + \lambda_{d} a_{x,d} + b_x$, for any fixed positive integers $a_{x,j}$ and integer $b_x$, is a regular language $D_x$. 

Let us now define $g_\sigma = \land_{x\mbox{ string variable}} (x\in L_x\cap D_x)$. We analyse the sets of strings which can be expressed by $x$ in $f_\sigma$ and $g_\sigma$, respectively. 

Let $w$ be a string which can be expressed by a string variable $x$ in $f_\sigma$. Then, by the construction of the set $D_x$, it is clear that $w\in D_x$. Also, there exists an assignment, which satisfies $f_\sigma$, where $x$ is mapped to $w$ and, for other each variable $y$, the string assigned to $y$ is in $D_y$ (simply because the formula $f_{ar}$ is satisfied in a satisfying assignment of $f_\sigma$). Therefore, the respective assignment for the variables in $f_\sigma$ is also a satisfying assignment for $g_\sigma$, and $w$ can be expressed by $x$ also in $g_\sigma$. 

For the converse, let $u_x$ be a string which can be expressed by $x$ in $g_\sigma$, and let $\ell_x=|u_x|$. This means that $u_x\in D_x\cap L_x$. By definition, $u_x$ is part of at least one satisfying assignment for $g_\sigma$. Note, however, that this initial assignment is not necessarily satisfying $f_{ar}$: it simply consists in an assignment $y=u'_y\in D_y\cap L_y$ for each variable $y$ of $g_\sigma$ other than $x$, and the lengths of these strings are not synchronized in such a way that they form (together with $\ell_x$) a solution for $f_{ar}$. However, because $u_x\in D_x$, we obtain that $f_{ar}$ has a solution with $|x|=\ell_x$. So, there exists an assignment $|y|=\ell_y$, for all the other string variables $y$ occurring in $g_\sigma$, which, together with $\ell_x$, satisfies $f_{ar}$. By the explanations we gave during the construction, it follows that there exists for each variable $y\neq x$ a string $u_y$ (not necessarily the same from the satisfying assignment of $g_\sigma$ considered above) such that the numbers $\ell_y=|u_y|$, for $y$ such that $y\neq x$, and $\ell_x$ are part of a satisfying assignment for $f_{ar}$. Clearly, $u_z\in L_z\cap D_z$, for all string variables $z$ (including $x$). Note that the assignment  $z=u_z$, for all string variables $z$ occurring in $g_\sigma$, is still a satisfying assignment of $g_\sigma$, although not necessarily the one we started with. Thus, as $u_z\in L_z$, for all string variables $z$ of $f_\sigma$, and the tuple defined by the numbers $\ell_z$ (together with an assignment of the integer variables) satisfies $f_1$, it follows that the assignment $z=u_z$, for all string variables $z$ occurring in $f_\sigma$, is part of a satisfying assignment for $f_\sigma$. In conclusion, if $u_x$ is a string expressed by $x$ in $g_\sigma$, we can construct a satisfying assignment for the variables of $g_\sigma$, which is part of a satisfying assignment of the variables of $f_\sigma$. Therefore, $u_x$ can be expressed by $x$ in $f_\sigma$ too. 

Now, it is easy to see that the following statements are equivalent:
\begin{itemize}
    \item $w$ is a string which can be expressed by $x$ in $f$; 
    \item $w$ is a string which can be expressed by $x$ in $f'$; 
    \item there exists an assignment $\sigma$ of the atoms of $f'$ which satisfies $f'$ and $w$ is a string which can be expressed by $x$ in $f_\sigma $; 
    \item there exists an assignment $\sigma$ of the atoms of $f'$ which satisfies $f'$ and $w$ is a string which can be expressed by $x$ in $g_\sigma $;
    \item $w$ is a string which can be expressed by $x$ in $g'$, the disjunction of all the formulae $g_{\sigma}$.
\end{itemize}
Because the class of languages ${\mathcal C}$ includes all the regular languages and is effectively closed under intersection and complementation, we obtain that $g'$ is a ${\mathcal C}_t$-formula. This concludes our proof.
\end{proof}

As $\VPLlang$ is a class which fulfills the properties of the class ${\mathcal{C}}$ from the above lemma, and is strictly included in $\RE$, we immediately get the first claim of the following theorem. The second claim can also be shown with some additional effort. 
\begin{theorem}\label{the:separatinglengthVPL}\label{the:regulartheories}
(1) $\VPL \, \sim \, \VPL \+ \LEN  \prec \, \VPL\+ \CON $.\\
(2) $\REG \, \sim \, \REG \+ \LEN \, \sim \,  \REG \+ \LEN \+ \CON$.
\end{theorem}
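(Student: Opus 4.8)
The plan is to derive the three $\sim$-equivalences from Lemma~\ref{lemma:cftheories} together with a concatenation-elimination argument, and to obtain the strict inclusion in Part~(1) from Theorem~\ref{thm:VPLCONequalsRE}. For the first claim, observe that $\VPLlang$ satisfies the hypotheses placed on the class $\mathcal{C}$ in Lemma~\ref{lemma:cftheories}: it contains $\REGlang$, is contained in $\GCFlang$, is effectively closed under intersection and complement (as recalled in Section~\ref{sec:preliminaries}), and a VPA can be converted into a PDA and hence into a context-free grammar. Thus Lemma~\ref{lemma:cftheories} gives $\VPL\+\LEN \sim \VPL$. For $\VPL\+\LEN \prec \VPL\+\CON$: by Remark~\ref{remark:basictheories} the languages expressible in $\VPL$ (hence, by the previous line, in $\VPL\+\LEN$) are exactly $\VPLlang$, while by Theorem~\ref{thm:VPLCONequalsRE} those expressible in $\VPL\+\CON$ are exactly $\RE$; since every visibly pushdown language is recursively enumerable, this yields $\VPL\+\LEN \preceq \VPL\+\CON$, and the inclusion is proper because $\RE$ contains languages that are not visibly pushdown (indeed not context-free) — for instance $\{a^nb^nc^n \mid n\in\mathbb{N}_0\}$, which is decidable, hence recursively enumerable.

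For Part~(2), applying Lemma~\ref{lemma:cftheories} with $\mathcal{C} = \REGlang$ (which plainly meets all of its requirements) gives $\REG\sim\REG\+\LEN$, so by transitivity of $\sim$ it remains only to show $\REG\+\LEN\+\CON \sim \REG\+\LEN$; since $\REG\+\LEN$ is a syntactic fragment of $\REG\+\LEN\+\CON$, this reduces to $\REG\+\LEN\+\CON \preceq \REG\+\LEN$, which I would prove by eliminating concatenation one atom at a time. By the definition of $\mathcal{T}_{arith}^\Sigma$, length terms already involve only basic string terms, so in a $\REG\+\LEN\+\CON$ formula $f$ concatenation can occur only inside a language-membership atom $t\in L$ with $t = t_1 t_2 \ldots t_k$, each $t_j \in \mathcal{X}\cup\Sigma$, and $L$ regular. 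Fix a DFA $A = (Q,\Sigma,\delta,q_0,F)$ accepting $L$; for $p,q\in Q$ let $L_{p,q} = \{w\in\Sigma^* \mid \delta(p,w)=q\}$, which is regular. Writing $p_0 = q_0$, I would replace the atom $t\in L$ by
\[
\bigvee_{\substack{(p_1,\ldots,p_k)\in Q^k \\ p_k \in F}} \;\; \bigwedge_{j=1}^{k} (t_j \in L_{p_{j-1},p_j}),
\]
where a conjunct whose term $t_j$ is a letter is simplified to its constant truth value (true if $\delta(p_{j-1},t_j)=p_j$, false otherwise). Since $A$ is deterministic, for any substitution $h$ the states visited while reading $h(t_1)\ldots h(t_k)$ from $q_0$ are uniquely determined and $h(t)\in L$ iff the last of them lies in $F$; hence the displayed formula is logically equivalent, as a predicate, to $t\in L$. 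Carrying out this replacement for every such atom throughout $f$ — which is sound because equivalence of atoms is preserved by all Boolean connectives, negation included — preserves the set of satisfying assignments and therefore the language expressed by each variable, and the result is a concatenation-free formula using only regular-membership and length atoms, i.e.\ a $\REG\+\LEN$ formula. Combined with $\REG\sim\REG\+\LEN$, this completes the chain $\REG\sim\REG\+\LEN\sim\REG\+\LEN\+\CON$.

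The only step with genuine content is the concatenation-elimination, and the point that needs care is that the replacement be an \emph{exact} logical equivalence; this is precisely why a \emph{deterministic} automaton is used. If a variable $x$ occurs several times inside $t$ (or across several atoms of $f$) it picks up several conjuncts $x\in L_{p,q}$, but these jointly amount to the single regular constraint $x\in\bigcap L_{p,q}$, so no ``decoupling'' of the kind warned about in Remark~\ref{rem:identicalSolutions} can arise — that phenomenon is caused by the interaction of word equations with length (or similar) constraints, and $\REG\+\LEN\+\CON$ contains no word equations.
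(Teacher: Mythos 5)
Your proposal is correct and follows essentially the same route as the paper: Lemma~\ref{lemma:cftheories} applied to $\VPLlang$ and $\REGlang$ for the $\LEN$-equivalences, Theorem~\ref{thm:VPLCONequalsRE} plus $\VPLlang \subsetneq \RE$ for the strictness in Part~(1), and elimination of concatenation in $\REG\+\LEN\+\CON$ by splitting a DFA run on a concatenated term into per-variable segments and disjoining over the intermediate state tuples (the paper's ``consistent language assignments''). The only difference is cosmetic: the paper first rewrites $f$ as a disjunction over truth-value assignments of its atoms before doing the state-splitting, whereas you replace each membership atom in place by an exactly equivalent formula, which is sound for the reason you give.
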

\begin{proof}$ $

{\bf Proof of Part (1):}
This is straightforward, as $\VPL$ fulfils the requirements of Lemma~\ref{lemma:cftheories}.

{\bf Proof of Part (2):}
Firstly, we need to introduce a notation. 

If $A=(Q,q_0,F,\delta)$ is a finite automaton with the set of states $Q$, initial state $q_0$, set of final states $F$, and transition function $\delta$, and $q_1,q_2\in Q$, then $A_{q_1,q_2}$ is the finite automaton $A_{q_1,q_2}=(Q,q_1,\{q_2\},\delta)$. In other words, $A_{q_1,q_2}$ is a finite automaton with the same underlying graph as $A$, but with a different initial state ($q_1$ instead of $q_0$) and a different set of final states ($\{q_2\}$ instead of $F$).

We now move to the actual proof. $\REG \, \sim \, \REG \+ \LEN$ follows from Lemma~\ref{lemma:cftheories}, as $\REGlang$ clearly fulfils all properties of the class ${\mathcal C}$ from the respective lemma. 

Let us now consider a formula $f$ from $\REG \+ \LEN \+ \CON$. We construct a formula $f''$ from $\REG \+ \LEN$ which is equivalent to $f$. 

Let $\alpha_1,\ldots,\alpha_r$ be the atoms appearing in $f$. Similarly to Lemma~\ref{lemma:cftheories}, we can compute all the assignments of truth values to the atoms $\alpha_1,\ldots,\alpha_r$ which make $f$ true. For each such assignment $\sigma$ we construct a formula $f_\sigma$ as the conjunction of the formulae $\beta_\ell$, for $\ell\in\{1,\ldots,r\}$, where $\beta_\ell=\alpha_\ell$ if $\alpha_\ell$ is true under $\sigma$ and $\beta_\ell=\neg \alpha_\ell$, otherwise. Moreover, in $f_\sigma$ we replace each language constraint $\neg{(\beta \in L)}$ by the constraint ${\beta \in L^C}$, where $L^C$ is the complement of the language $L$. Let $f'$ be the disjunction of all the formulae $f_{\sigma}$. It is clear that $f'$ is equivalent to $f$, i.e., the satisfying assignments (of the variables) for $f$ are exactly the same as the satisfying assignments (of the variables) for $f'$. 

Consider now a conjunction $g$ of language membership atoms from $\REG \+ \CON$ and arithmetic atoms. We define a {\em consistent language assignment of $g$} as a mapping $\pi$ which leaves every arithmetic atom unchanged and maps every membership atom $\beta=w_0x_1w_1\cdots x_kw_k \in L$ of $g$, where $x_i$ is a string variable for $i\in \{1,\ldots,k\}$, $w_i$ is a constant string for $i\in \{0,\ldots,k\}$, and $L$ is a regular language accepted by a DFA $A=(Q,q_0,F,\delta)$, to the formula $\pi(\beta) = \land_{i=1,k} (x_i\in L(A_{q_{2i-1},q_{2i}}))$, where
\begin{itemize}
    \item $q_1=\delta(q_0,w_0)$,
    \item $q_{2i+1} = \delta(q_{2i},w_i)$, for $i=1,k-1$,
    \item $\delta(q_{2k},w_{k})\in F$.
\end{itemize}
It is clear that the formula $g_\pi$, obtained by replacing each atom $\beta$ of $g$ by $\pi(\beta)$, is now a formula over $\REG + \LEN$. Moreover, it is immediate that the formula $rl(g)$ defined as the disjunction of the formulae $g_\pi$, for all consistent language assignments $\pi$ of $g$, is a formula over $\REG\+\LEN$. Clearly, $rl(g)$ is equivalent to $g$, i.e., the satisfying assignments for $g$ are exactly the same as the satisfying assignments for $rl(g)$. 

Coming back to the formula $f'$ which we have constructed above starting from $f$, we further define a formula $f''$ as the disjunction of all the formulae $rl(f_{\sigma})$. It is clear that $f''$ is a formula from $\REG\+\LEN$, and that $f''$ is equivalent to $f$, i.e., the satisfying assignments for $f$ are exactly the same as the satisfying assignments for $f''$. 

This shows that $\REG\+\CON\+\LEN\sim\REG\+\LEN$. 
\end{proof}

Recall that the languages expressible in $\REG$ (and in $\REG\+\LEN$ and $\REG\+\CON\+\LEN$) and $\VPL$ (as well as $\VPL\+\LEN$) are exactly the classes $\REGlang$ and, respectively, $\VPLlang$, and for each formula in one of these theories we can effectively construct a corresponding automaton accepting the language expressed by a given variable. See Remark \ref{rem:constructiveequivalence} below.\looseness=-1

\begin{remark}\label{rem:constructiveequivalence}
In fact, for a formula $f$ in the theory $\VPL\+\LEN$ (which includes the theories $\REG$, $\REG \+ \LEN \+ \CON$, and $\VPL$) we can effectively construct a formula $g'$ which is a disjunction of conjunctions $g_\sigma$ involving at most one membership predicate $x\in S_x$ per variable, where each language $S_x$ is in $\VPLlang$. We can remove from $g'$ the conjunctions $g_\sigma$ which contain at least one membership predicate $x\in S_x$ with $S_x=\emptyset$. Now, it is easy to see that for the language expressed by $x$ is exactly the union of the languages $S_x$ for all membership predicates $x\in S_x$ occurring in $g'$. Therefore, this language is in the class $\VPLlang$, and we can effectively compute an automaton accepting it. Therefore, we can easily conclude that for two given formulae $f$ and $\phi$ from $\VPL \+ \LEN$ and a variable $x$ occurring in $f$ and a variable $\phi $ occurring in $\phi$, we can decide whether the language expressed by $x$ is the same as (respectively, included in) the language expressed by $y$. 
\end{remark}

Let us now consider the theory $\CF$. The result of Lemma~\ref{lemma:cftheories} does not apply in this case, as the class of languages $\CFlang$ is not closed under intersection. In fact, for Lemma~\ref{lemma:cftheories} to work, it would be enough to have that if $L$ is a finite intersection of languages from the class ${\mathcal C}$ then the set $S=\{|w|\mid w\in L\}$ is semi-linear. However, this still does not hold for $\CFlang$. See Example \ref{exampleDCF} below.

\begin{example}\label{exampleDCF} Let $U_1 = \{a^n b^{2n} \mid n \geq 1\}$ and $L_1 = U_1^+$.
Let $U_2 = \{b^n a^n \mid n \geq 1\}$ and $L_2 = a U_2^+ b^+$. It is clear that $L_1$ and $L_2$ are in $\CF$. 
Let $L=L_1\cap L_2$. It is not hard to observe that $ L= \{ a b^2 a^2 b^4 a^4 b^8 \cdots a^{2^k} b^{2^{k+1}} \mid k \geq 1\}$. Further, let $S=\{|w|\mid w\in L\}$. We have that $S=\{2^{k+2} + 2^{k+1} - 3\mid k\geq 1\}$. Clearly, $S$ is not a semi-linear set (and it is not a deterministic context-free language either).
\end{example}

We now show an additional lemma.
\begin{lemma}\label{lem:exampleCFsep}
$L = \{ w c w \mid w \in \{a,b\}^* \}$ is expressible in $\WE\+ \REG$ and not in $\CF$.
\end{lemma}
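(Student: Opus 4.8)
The plan has two parts. \textbf{Expressibility in $\WE\+\REG$} is the easy direction: take the single formula $f \;=\; \bigl(x = y\,c\,y\bigr)\ \land\ \bigl(y \in \{a,b\}^*\bigr)$, which lies in $\WE\+\REG$ because the first conjunct is a word equation whose right-hand side is the extended string term $y\cdot c\cdot y$ (type A3) and the second is a regular membership atom (type A1). For a word $w$, the formula $f[x\mapsto w]$ is satisfiable exactly when $w = vcv$ for some $v\in\{a,b\}^*$, so the language expressed by $x$ in $f$ is precisely $L$; hence $L$ is expressible in $\WE\+\REG$ (and a fortiori in $\WE\+\REG\+\LEN$).

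\textbf{Inexpressibility in $\CF$: set-up.} By Remark~\ref{remark:basictheories}, the class of languages expressible in $\CF$ is exactly the Boolean closure of $\CFlang$, so it suffices to show $L$ is not a finite Boolean combination of deterministic context-free languages. Suppose towards a contradiction that $L = \Phi(L_1,\dots,L_r)$ for a Boolean function $\Phi$ and $L_1,\dots,L_r\in\CFlang$ accepted by DPDAs $M_1,\dots,M_r$. First I would relativise to the regular language $R=\{a,b\}^*c\{a,b\}^*\supseteq L$: since $\CFlang$ is closed under intersection with regular languages and intersecting with $R$ commutes with the Boolean operations when complementation is taken relative to $R$, we may assume each $L_t\subseteq R$, i.e.\ $L_t=\{wcv\mid (w,v)\in T_t\}$ for a relation $T_t\subseteq\{a,b\}^*\times\{a,b\}^*$, with $\{(w,w)\mid w\in\{a,b\}^*\}=\Phi(T_1,\dots,T_r)$. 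It is important to note that one cannot instead reduce to a \emph{bounded} sublanguage: for instance $L\cap a^*b^*ca^*b^* = \{a^mb^nca^mb^n\mid m,n\ge 0\}$ \emph{is} a Boolean combination of DCFLs, being the intersection of the two deterministic context-free languages $\{a^mb^nca^{m'}b^{n'}\mid m=m'\}$ and $\{a^mb^nca^{m'}b^{n'}\mid n=n'\}$; so the argument must exploit the exponentially many distinct $w\in\{a,b\}^n$.

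\textbf{Inexpressibility in $\CF$: core argument.} For a fixed large $n$ and each $w\in\{a,b\}^n$, let $\gamma_t(w)$ be the configuration (control state plus stack contents) of $M_t$ after reading $wc$; the stack has height $O(n)$. Continuing the run of $M_t$ on a second block $v\in\{a,b\}^n$ gives an acceptance bit $A_t(w,v)\in\{0,1\}$, and by assumption $\Phi\bigl(A_1(w,v),\dots,A_r(w,v)\bigr)=1$ iff $w=v$. The structural property of a DPDA that I would use is that, while reading the length-$n$ block $v$ from stack $\gamma_t(w)$, it consults that stack in one last-in/first-out pass: it pops a top segment of it in exactly the \emph{reverse} of the order in which those symbols were pushed (during the reading of $w$), and only after popping down to some level does the remainder of $v$ interact with what lies below. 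Thus each $M_t$ can in effect only test $v$ against a reversal-type transformation of (a portion of) the sequence pushed while scanning $w$ — consistent with the fact that $\{wcw^R\mid w\in\{a,b\}^*\}$ is deterministic context-free whereas $\{wcw\}$ is not. I would then argue that finitely many such reversal-type tests, combined by an arbitrary Boolean function, cannot determine $w$ exactly: for $n$ large one can find $w\neq w'$ in $\{a,b\}^n$ with $A_t(w,w')=A_t(w,w)$ for all $t$, so $\Phi$ gives the same verdict to $wcw$ and $wcw'$, contradicting $wcw\in L$ and $wcw'\notin L$.

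The main obstacle is making the last step rigorous. The stack of $M_t$ after reading $wc$ has $\Theta(n)$ symbols, which is information-theoretically enough to record $w$, so a crude pigeonhole on configurations fails; the proof must genuinely exploit that each $M_t$ consults its stack in a single reverse-order LIFO pass while reading the second block left to right, and then combine the resulting constraints across the $r$ machines through the unknown $\Phi$. Packaging exactly this analysis is the role of pumping lemmas for deterministic context-free languages, so one route is to strengthen such a pumping lemma so that it survives finite Boolean combinations; an alternative, probably cleaner, route is to invoke the known fact that the centre-marked copy language $\{w\#w\mid w\in\{a,b\}^*\}$ is not in the Boolean closure of $\CFlang$ and transfer this to $L$. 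I expect the final write-up to take one of these two routes.
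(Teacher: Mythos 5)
Your first part is fine: the formula $x = y\,c\,y \land y \in \{a,b\}^*$ does express $L$ in $\WE\+\REG$, and this matches what the paper treats as immediate. Your relativisation to $R=\{a,b\}^*c\{a,b\}^*$ and the observation that the bounded slice $\{a^mb^nca^mb^n\}$ \emph{is} an intersection of two DCFLs are both correct and are a good sanity check on why naive approaches fail. The problem is with your ``core argument.'' The structural claim that a DPDA reading the second block ``consults the stack in one last-in/first-out reverse pass'' is not a theorem you can invoke: while scanning $v$ the machine may interleave pushes and pops arbitrarily (and use $\varepsilon$-moves), so the interaction with $\gamma_t(w)$ is not a single reversal-type test, and you have no quantitative handle on what an arbitrary Boolean function $\Phi$ of $r$ such acceptance bits can distinguish. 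As you yourself note, a pigeonhole on configurations is unavailable because the stack can store all of $w$, and no standard DCFL pumping lemma survives complementation and Boolean combination. So as written this route has a genuine gap, and closing it would amount to proving the hard combinatorial fact from scratch.

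The paper takes exactly the escape route you flag as the ``probably cleaner'' alternative: it cites Wotschke (1973), who showed both that the Boolean closure of $\CFlang$ is (strictly) contained in the intersection-closure of the general context-free languages, and that $\{wcw \mid w\in\{a,b\}^*\}$ does not lie in that intersection-closure. Working with the intersection-closure of $\GCFlang$ rather than the Boolean closure of $\CFlang$ is what makes the argument tractable: one only has to rule out $L = L_1\cap\dots\cap L_r$ for context-free $L_i$, which can be attacked with Ogden/interchange-style pumping applied simultaneously to the $r$ grammars, with no complements of deterministic languages and no unknown $\Phi$ to fight. So your conclusion and your fallback plan are right, but the self-contained DPDA analysis you sketch as the primary route should be dropped or replaced by the citation; it is not a proof and is unlikely to become one without essentially redoing Wotschke's work.
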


\begin{proof}
Recall that the languages expressible in a theory belonging to the family $\CF$ are exactly the closure of the deterministic context-free languages under the Boolean operations: intersection, complement and union. It was shown in~\cite{wotschke1973} that this closure is strictly contained in the intersection-closure of the context-free languages. Moreover, it was also shown in~\cite{wotschke1973} that $L$ in question does not belong to the intersection closure of context-free languages. Thus, we may conclude that it is not expressible in any theory belonging to $\CF$.
\end{proof}

%, which also shows that the class of languages expressible in the theory $\CF$ is strictly larger than the class $\CFlang$). 
By Theorem~\ref{thm:VPLCONequalsRE} and the existence $\RE$-languages which are not expressible in $\CF$ (see~\cite{wotschke1973}, as well as Lemma \ref{lem:exampleCFsep} or Example \ref{exampleDCF}), we may infer the following relations: %\looseness=-1
%\begin{corollary}
$\CF \preceq \CF \+ \LEN \preceq \CF + \CON$ and $\CF \prec \CF + \CON $.   
%\end{corollary}
%
%\begin{lemma}\label{lem:exampleCFsep}
%$L = \{ w c w \mid w \in \{a,b\}^* \}$ is expressible in $\WE\+ \REG$ and not in $\CF$.
%\end{lemma}
%
%
This also shows that at least one of the relations $\CF \preceq \CF \+ \LEN$ and $\CF \+ \LEN \preceq \CF + \CON$ is strict. In fact, there are some indications (see Remark \ref{rem:CF-LEN} below) that the separation might occur between $\CF \+ \LEN$ and $\CF \+ \CON$.

\begin{remark}\label{rem:CF-LEN}
We observe that the language $L=\{wcw\mid w\in \{a,b\}^*\}$, which is expressible in $\CF \+ \CON$ and not $\CF$, is not expressible by a restricted set of formulas in $\CF \+ \LEN$.

Assume that there exists a formula $\phi$ in $\CF\+\LEN$, over the variables $x,y_1,\ldots,y_k$, such that $\phi$ is a conjunction of language membership atoms and arithmetic atoms, and the language expressed by $x$ is $L$. In this case, for each odd number $2k+1$ there exists $(w_x,w_{y_1},\ldots,w_{y_k})$, an assignment of the variables $(x,y_1,\ldots,y_k)$ which satisfies $\phi$, with $|w_x|=2k+1$. Also, if there is an assignment $w'_x$ of $x$ which satisfies all the language membership atoms involving $x$ and also $|w'_x|=2k+1$, then it is not hard to see that $(w'_x,w_{y_1},\ldots,w_{y_k})$ is also a satisfying assignment for $\phi$ (as the membership atoms involving various variables are independent). So, we can now observe that, in this framework, $L$ would be expressed by $x$ in a conjunction of membership atoms: the ones describing $x$ in $\phi$ and a new one, stating that the length of $x$ is odd. This new formula would clearly be in $\CF$, which leads to a contradiction. A similar argument would also work for slightly more complicated formulae $\phi$, but seems to need a nontrivial extension to cover the entire theory $\CF\+\LEN$. However, based on this observation, we conjecture that the relation $\CF \+ \LEN \preceq \CF + \CON$ is, in fact, strict.
\end{remark}

%Unlike with $\REG$, we cannot add concatenation to $\CF$ without increasing the expressive power. The same holds for $\VPL$, and we have separation between the following theories.

%\begin{corollary}\label{cor:separatinglengthandconcatenation}
%$\VPL \+ \LEN \, \prec \, \VPL\+ \CON $. 
%\end{corollary}

As said above, $\REG\+\LEN \+ \CON$ and $ \VPL \+ \LEN$ express exactly the classes of regular languages and VPL languages, respectively. Since the regular languages are a strict subset of the VPL languages, which in turn are a strict subset of the deterministic context-free languages, we may conclude the following strict inclusions in terms of expressibility:
%
%\begin{corollary}
$\REG \+ \LEN \+ \CON \, \prec \,\VPL \+ \LEN \, \prec \, \CF$. 
%\end{corollary}

Note that there are languages expressible in $\WE$ (such as $\{xx \mid x \in \Sigma^*\}$) which are not regular nor visibly pushdown, and thus not expressible in $\REG$ or $\VPL$ or theories with equivalent expressibility. So, 
%\begin{corollary}
$\REG \prec \WE \+ \REG$ holds.
%\end{corollary}
Moreover, we have already seen examples of regular languages which are not expressible in $\WE$ or $\WE \+ \LEN$. 
In this context, we leave open the following particularly interesting problem:
\begin{openproblem}
Which recursively enumerable languages (if any) can be shown not to be expressible by a formula from $\WE \+ \REG \+ \LEN$? 
\end{openproblem}

%, on which some of our later undecidability results will rely.

%\begin{remark}\label{rem:constructiveinclusion}
%In all cases where we have shown relations of the form $\mathfrak{F}_1 \preceq \mathfrak{F}_2$ for families of theories $\mathfrak{F}_1$ and $\mathfrak{F}_2$, the constructions are effective, meaning that we may transfer undecidability results related to languages expressible in theories from $\mathfrak{F}_1$ to languages expressible in theories from $\mathfrak{F}_2$.
%\end{remark}

Based on the previous results, we can now also discuss the emptiness problem, and the closely related finiteness problem. This is particularly interesting since emptiness for a language expressed by a formula $f$ and variable $x$ corresponds exactly to the satisfiability problem for $f$.
Based on existing literature \cite{Alur2004,HopcroftUllman,schulz1990,lot:alg}, it is not hard to show that emptiness and finiteness are decidable for $\VPL$ and $\WE\+\REG$ but undecidable for $\CF$.

On the other hand, two cases where it seems particularly difficult to settle the decidability status of the satisfiability and, therefore, emptiness problems are $\WE\+\LEN$ and $\WE\+\REG\+\LEN$. Emptiness for the former in particular is equivalent to the satisfiability problem for word equations with length constraints which is a long-standing and important open problem in the field. Similarly, the latter is prominent in the context of string-solving and as such satisfiability/emptiness also presents an important open problem which is likely to be closely related to that of $\WE \+ \LEN$. 
Consequently, $\WE \+ \VPL$ presents a particularly interesting case as a ``reasonable'' generalisation of $\WE\+\REG\+\LEN$ and, in the absence of answers regarding this theory, it makes sense to consider the same problems for theories  with slightly more or slightly less expressive power. If we extend the expressive power as far as $\WE+\CF$, then undecidability is inherited directly from $\CF$. However, satisfiability and emptiness remain decidable for $\VPL$. Moreover, visibly pushdown languages share many of the desirable computational properties of regular languages, meaning that we can view  $\WE \+ \VPL$ as a slighter generalisation of $\WE\+\REG\+\LEN$. Nevertheless, due to our result from Theorem \ref{thm:VPLCONequalsRE} of the previous section, we know that $\VPL \+ \CON$ expresses already $\RE$, so
emptiness and finiteness are undecidable for $\VPL \+ \CON$, and consequently for $\WE \+ \VPL$ and other families $\mathfrak{F}$ of theories satisfying $\VPL \+ \CON \preceq \mathfrak{F}$. The left part of Figure~\ref{fig:overviewoftheories} summarizes the understanding of the emptiness and finiteness problems, as resulting from our results. \looseness=-1

\section{Universality, Greibach's Theorem, and Expressibility Problems}\label{sec:universality}

Universality is an important problem for a number of reasons. Firstly, undecidability of universality implies undecidability of equivalence and inclusion for any theory in which the universal language $\Sigma^*$ is expressible (which is true in any string-based theory containing at least one tautology). Secondly, an undecidable universality problem is the foundation for Greibach's theorem, which is helpful for proving that many other problems are undecidable. E.g., we shall make use of Greibach's theorem to show several problems concerning expressibility of languages in different theories are undecidable. We recall Greibach's theorem below. \looseness=-1

\begin{theorem}[\cite{HopcroftUllman}]\label{thm:greibachOriginal}
Let $\mathcal{C}$ be a class of formal languages over an alphabet $\Sigma \cup \{\#\}$ such that each language in $\mathcal{C}$ has some associated finite description. Suppose $\mathcal{P} \subsetneq \mathcal{C}$ with $\mathcal{P} \not= \emptyset$ and suppose that all the following hold:
\begin{enumerate}
\item{}
$\mathcal{C}$ and $\mathcal{P}$ both contain all regular languages over $\Sigma \cup \{\#\}$,
\item{}
$\mathcal{P}$ is closed under quotient by a single letter,
\item{}
Given (descriptions of) $L_1,L_2 \in \mathcal{C}$ descriptions of $L_1  \cup  L_2$, $L_1R$ and $RL_1$ can be computed for any regular language $R \in \mathcal{C}$,
\item{}
It is undecidable whether, given $L \in \mathcal{C}$, $L = \Sigma^*$.
\end{enumerate}
Then the problem of determining, for a language $L \in \mathcal{C}$, whether $L \in \mathcal{P}$ is undecidable.
\end{theorem}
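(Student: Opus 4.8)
The plan is to reduce the universality problem for $\mathcal{C}$ — which is undecidable by hypothesis~(4), and which I will apply to inputs $L \in \mathcal{C}$ with $L \subseteq \Sigma^*$ — to the problem of deciding, given $L' \in \mathcal{C}$, whether $L' \in \mathcal{P}$. Since $\mathcal{P} \subsetneq \mathcal{C}$, I fix once and for all a language $L_0 \in \mathcal{C}\setminus\mathcal{P}$, which (after a routine normalisation, exploiting that the hard universality instances are $\#$-free) I take with $L_0 \subseteq \Sigma^*$. Given an input $L \in \mathcal{C}$, $L \subseteq \Sigma^*$, I form
\[
  L_1 \;=\; L\,\#\,\Sigma^* \;\cup\; \Sigma^*\,\#\,L_0 .
\]
The regular languages $\#\Sigma^*$ and $\Sigma^*\#$ belong to $\mathcal{C}$ by hypothesis~(1), so by hypothesis~(3) (effective closure of $\mathcal{C}$ under union and under concatenation with a regular language on either side) a description of $L_1 \in \mathcal{C}$ is computable from one of $L$.

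The next step is to prove that $L = \Sigma^* \iff L_1 \in \mathcal{P}$. If $L = \Sigma^*$, then $L\,\#\,\Sigma^* = \Sigma^*\#\Sigma^*$ is precisely the set of words containing exactly one $\#$, and since $L_0 \subseteq \Sigma^*$ we have $\Sigma^*\,\#\,L_0 \subseteq \Sigma^*\#\Sigma^*$; hence $L_1 = \Sigma^*\#\Sigma^*$ is regular and therefore lies in $\mathcal{P}$ by hypothesis~(1). For the converse, suppose $L \neq \Sigma^*$ and pick $w \in \Sigma^*\setminus L$. I then compute the left quotient of $L_1$ by the word $w\#$: a short case analysis on the positions of the $\#$'s shows that $w\#v \notin L\,\#\,\Sigma^*$ for every $v$ (the only $\#$-free prefix of $w\#v$ immediately followed by $\#$ is $w$ itself, which is not in $L$), whereas $w\#v \in \Sigma^*\,\#\,L_0$ exactly when $v \in L_0$; thus the left quotient of $L_1$ by $w\#$ equals $L_0$. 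Since that quotient is an iterated left quotient by single letters, hypothesis~(2) shows that $L_1 \in \mathcal{P}$ would force $L_0 \in \mathcal{P}$, contradicting the choice of $L_0$; hence $L_1 \notin \mathcal{P}$.

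Finally I would conclude: $L \mapsto L_1$ is an effective reduction of the universality problem (restricted to $\#$-free inputs, which already suffices for undecidability) to the membership problem for $\mathcal{P}$, so the latter is undecidable. I expect the crux not to be any individual estimate but the bookkeeping of which closure properties are actually invoked: the construction must stay inside $\mathcal{C}$ using only union and one-sided regular concatenation, and the ``out of $\mathcal{P}$'' direction must use only quotient-by-a-single-letter for $\mathcal{P}$, since nothing stronger (intersection, complement, inverse morphism) is assumed. A secondary point to pin down is the normalisation guaranteeing that $L_0$, and the hard universality instances, may be assumed $\#$-free, so that the $L = \Sigma^*$ case really does collapse $L_1$ to a regular language.
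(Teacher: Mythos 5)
This is essentially the standard proof of Greibach's theorem from the source the paper cites (the paper itself states the result without proof): reduce universality to $\mathcal{P}$-membership via $L \mapsto L\#\Sigma^* \cup \Sigma^*\#L_0$ with $L_0 \in \mathcal{C}\setminus\mathcal{P}$, using regularity of $\Sigma^*\#\Sigma^*$ in one direction and an iterated single-letter quotient by $w\#$ for a witness $w\in\Sigma^*\setminus L$ in the other; your version is the left--right mirror of the textbook one ($L_0$ on the right, left quotients), which is fine provided hypothesis~(2) is read as closure under the corresponding one-sided quotient (the paper's applications actually implement the right quotient, via $f \land ya = x$). The two points you flag --- that $L_0$ and the hard universality instances must be $\#$-free --- are genuine requirements of the argument but are not derivable ``normalisations'' from the hypotheses as literally stated (you have no intersection with regular sets available to restrict to $\#$-free words); they belong to the intended reading of the theorem, in which $\#$ is a fresh marker adjoined to an alphabet $\Sigma$ over which both the undecidable universality problem and a witness $L_0\notin\mathcal{P}$ already live, and this is exactly how the paper deploys the result in Theorems~\ref{the:greibachapplies}, \ref{the:greibachpumping} and \ref{the:eliminatinglength}.
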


Note that in order to apply Greibach's theorem, we need a variant of the universality problem to be undecidable which refers to a sub-alphabet, rather than the whole alphabet. %Thus we define the subset-universality problem as follows:

\begin{definition}
Let $\mathcal{T}$ be a theory defined in Section~\ref{sec:theories} with underlying alphabet $\Sigma$ and such that $|\Sigma| \geq 3$. The subset-universality problem is: given a formula $f \in \mathcal{T}$, variable $x$ occurring in $f$ and $S \subset \Sigma$ with $|S| > 1$, is the language expressed by $x$ in $f$ equal to $S^*$?
\end{definition}

We can infer the following results already from the literature:

\begin{theorem}[\cite{Ganesh2012WordEW,durnev1995, Alur2004, HopcroftUllman}]\label{the:universalityliterature}
Universality is undecidable for $\WE$ and $\CF$, and decidable for $\VPL$. Subset-universality is decidable for $\VPL$ but not $\CF$.
\end{theorem}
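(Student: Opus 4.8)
The plan is to collect the four assertions from existing work, supplying only light connective tissue; no new ideas are needed, but a little care is required for the two $\CF$ lower bounds.

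\textbf{Universality for $\WE$.} The language expressed by a variable $x$ in a Boolean combination $f$ of word equations equals $\Sigma^*$ exactly when the sentence $\forall x\,\exists \bar y\, f$ is true, where $\bar y$ lists the remaining variables; moreover, by Remark~\ref{remark:basictheories}, $f$ may be converted to a single word equation without changing the language expressed by $x$. Hence deciding universality for $\WE$ is precisely deciding truth of (positive) $\forall\exists$-sentences over word equations, which is undecidable by Durnev~\cite{durnev1995} (see also~\cite{Ganesh2012WordEW}). So here I would simply exhibit this correspondence.

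\textbf{Decidability for $\VPL$.} By Alur and Madhusudan~\cite{Alur2004}, $\VPLlang$ is effectively closed under complement and has decidable emptiness, hence decidable universality, equivalence and inclusion; and by Remark~\ref{rem:constructiveequivalence} every formula of $\VPL$ effectively yields a VPA for the language it expresses, which gives decidability of universality for $\VPL$. For subset-universality, note that for $S \subset \Sigma$ the language $S^*$ is regular and therefore in $\VPLlang$ (for the fixed partition of $\Sigma$, since $\REGlang \subseteq \VPLlang$), so testing $L = S^*$ reduces to deciding equivalence of two visibly pushdown languages, which is decidable.

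\textbf{Undecidability for $\CF$.} By Remark~\ref{remark:basictheories}, the languages expressible in $\CF$ are exactly the Boolean closure of the deterministic context-free languages. The standard undecidability of disjointness of two context-free languages via Turing-machine computation histories (see~\cite{HopcroftUllman}) can be realised with two \emph{deterministic} context-free languages $L_1, L_2$ over a fixed alphabet, each $L_i$ verifying one of the two interleaved families of consecutive-configuration pairs, which a DPDA can check using the delimiters. Since $\CFlang$ is effectively closed under complement, $\Sigma^* \setminus L_1$ and $\Sigma^* \setminus L_2$ are deterministic context-free and their union is expressible in $\CF$; it equals $\Sigma^*$ iff $L_1 \cap L_2 = \emptyset$, which is undecidable. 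For subset-universality, I would run the same reduction with the histories binary-encoded so that $L_1, L_2 \subseteq S^*$ for a two-letter subalphabet $S \subset \Sigma$ (recall $|\Sigma| \geq 3$ is assumed). Then each $S^* \setminus L_i = S^* \cap (\Sigma^* \setminus L_i)$ is deterministic context-free, so $(S^* \setminus L_1) \cup (S^* \setminus L_2) = S^* \setminus (L_1 \cap L_2)$ is expressible in $\CF$ and equals $S^*$ iff $L_1 \cap L_2 = \emptyset$. Hence subset-universality is undecidable for $\CF$.

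\textbf{Main obstacle.} As every ingredient is essentially a citation, the only delicate point is the bookkeeping in the $\CF$ lower bounds: ensuring the computation-history languages are genuinely \emph{deterministic} context-free (so that complementation keeps them in $\CFlang$ and the resulting union lies in the Boolean closure that $\CF$ expresses), and, for subset-universality, fitting the whole encoding inside a proper subalphabet $S$ with $|S| > 1$ so that $S \subset \Sigma$ while the construction remains valid.
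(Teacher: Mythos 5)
The paper states this theorem purely as a collection of citations and gives no proof of its own; your write-up correctly reconstructs the standard arguments behind each citation (Durnev's $\forall\exists$-positive undecidability for $\WE$, Alur--Madhusudan closure and decidability for $\VPL$, and the DCFL computation-history disjointness reduction, restricted to a subalphabet for subset-universality, combined with Remark~\ref{remark:basictheories}'s characterisation of $\CF$ as the Boolean closure of $\CFlang$). This is essentially the same approach as the paper, and the details you flag as delicate (determinism of the two history-checking languages, and fitting the encoding into $S^*$) are indeed only routine bookkeeping.
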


To discuss the equivalence and inclusion problems, it makes sense to consider them in a general setting where the two languages may be taken from different theories. We therefore consider equivalence and inclusion problems for pairs of theories $(\mathfrak{T}_1, \mathfrak{T}_2)$. Combining the known results above with the constructive equivalences pointed out in Remark~\ref{rem:constructiveequivalence}, we easily get that
equivalence and inclusion for $(\mathfrak{T}_1, \mathfrak{T}_2)$ are undecidable whenever at least one of $\mathfrak{T}_1,\mathfrak{T}_2$ contains $\WE$ or $\CF$, but they are decidable for all other pairs of theories. In a similar way, one can show that cofiniteness is undecidable for $\WE$.

We may, clearly, propagate undecidability of universality and related problems upwards through families of theories containing $\WE$ (or $\CF$) as a syntactic subset, or apply Rice's theorem to get such results for all theories expressing $\RE$.
%Naturally, we can also apply Rice's theorem to get undecidability results for all theories which can express the recursively enumerable languages. 
%However, there is one theory where this becomes non-trivial, due to the fact that the inclusion is not syntactic, but comes via a non-trivial reduction which necessarily enlarges the alphabet: namely $\VPL \+ \CON$ (and by extension $\VPL \+ \CON \+ \LEN$). Note that this case is not covered by Remark~\ref{rem:constructiveinclusion} because by changing the underlying alphabet, we change the decision problem from universality to subset-universality or vice-versa. However, we can extend the undecidability result in this case in an ad-hoc way.
%\begin{theorem}\label{the:VPLuniversality}
%The universality problem for $\VPL \+ \CON$ is undecidable.
%\end{theorem}
%Moreover, %using results from~\cite{karhumaki2000} and by adapting a construction of~\cite{Ganesh2012WordEW},
We can also show the following.\looseness=-1

\begin{theorem}\label{the:subsetuniversality}
Subset-universality is decidable for $\WE \+ \LEN$ and undecidable for $\WE + \REG$. In particular, for $S$ large enough, for any theory $\mathfrak{T}$ from $\WE \+ \REG$ with underlying alphabet $\Sigma \supset S$, the problem of whether a language expressed in $\mathfrak{T}$ is exactly $S^*$ is undecidable.\looseness=-1
\end{theorem}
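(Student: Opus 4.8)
The plan is to handle the two parts of Theorem~\ref{the:subsetuniversality} separately. For decidability of subset-universality for $\WE \+ \LEN$, I would follow the same strategy used for the decidability of universality for $\WE$ (from~\cite{Ganesh2012WordEW,durnev1995}), adapting it to the subset setting and the presence of length constraints. The key observation is that whether the language $L_x$ expressed by $x$ in $f$ equals $S^*$ can be reduced to a combination of: (i) an inclusion test $L_x \subseteq S^*$, and (ii) a test that every word of $S^*$ lies in $L_x$. For (i), note that $L_x \subseteq S^*$ iff the formula $f \land (x \notin S^*)$ is unsatisfiable; since $S^*$ is regular, $x \notin S^*$ is a regular constraint, and its conjunction with a $\WE \+ \LEN$ formula is still $\WE \+ \REG \+ \LEN$ — so this does not immediately help. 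Instead I would argue combinatorially: if $L_x$ contains a word with a letter outside $S$, then using the pumping-style tools (or a direct analysis of the word equation whose solution set projects to $L_x$) one can either decide this directly or reduce it to finitely many checks; in the worst case one can use the fact that for $\WE \+ \LEN$ the relevant set, if it is ``large'', must contain a structured infinite family. For (ii), the containment $S^* \subseteq L_x$ is the harder direction and is where the length constraints must be controlled; the idea is that satisfying assignments for words over $S$ only are governed by a semi-linear set of lengths (after eliminating the word equation structure via the standard reductions), so one reduces to a Presburger-expressible condition.

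For undecidability of subset-universality for $\WE \+ \REG$, the plan is a reduction from a known undecidable problem, most naturally from universality for $\WE$ (Theorem~\ref{the:universalityliterature}) or, better, directly from the halting/Post correspondence-style encodings underlying the undecidability of satisfiability for word equations with regular constraints combined with some expressive feature. Concretely, I would take an instance whose positive answer corresponds to ``the language expressed by some variable is all of $\Sigma_0^*$'' for a genuinely smaller alphabet $\Sigma_0$, and then work inside a larger alphabet $\Sigma \supsetneq \Sigma_0 = S$. The regular constraints let us ``tag'' or ``pad'' words so that the expressed language over the subalphabet $S$ captures exactly the universality question for the original construction; the extra letters of $\Sigma \setminus S$ are used as separators/markers in the encoding, which is precisely why we need $|\Sigma| > |S|$ and $S$ ``large enough''. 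The structure would be: encode computations (or PCP solutions) as words over $S$, use a word equation to enforce the matching/consistency conditions, use regular constraints to enforce the format, and arrange that the expressed language is $S^*$ iff the underlying (undecidable) instance is negative — so that deciding subset-universality decides the instance.

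The main obstacle I expect is the decidability direction for $\WE \+ \LEN$, specifically step (ii): showing $S^* \subseteq L_x$ is decidable. Unlike plain universality for $\WE$, here the length constraints couple the variables (as emphasized in Remark~\ref{rem:identicalSolutions}), so one cannot freely decouple the value of $x$ from the rest of the assignment. The delicate point is to show that ``$L_x$ contains every word over $S$'' can be certified finitely despite the length constraints. I would try to argue that if $L_x \supseteq S^*$ then, by the pumping behaviour of solutions to a single word equation together with linear length constraints, there is a bounded-size witness structure (a finite family of ``solution schemes'' parametrized by Presburger-definable length tuples) covering all of $S^*$, and that the existence of such a covering is decidable by reduction to existential Presburger arithmetic. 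Making this rigorous — in particular ruling out the possibility that $S^*$ is covered only in some non-uniform, unbounded way — is the crux, and it is exactly where the restriction to length constraints (rather than arbitrary regular constraints) is essential, since with regular constraints the analogous problem becomes undecidable by the second part of the theorem.
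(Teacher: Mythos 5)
Your reduction for the undecidability half is essentially the paper's argument: starting from the undecidable universality problem for $\WE$ over an alphabet $S$, embed the formula into a theory over a larger alphabet $\Sigma \supset S$ and conjoin the regular constraints $x_i \in S^*$ for every variable, so that the set of satisfying assignments (and hence the expressed language) is unchanged and equals $S^*$ iff the original language was universal over $S$. You do not need any further ``tagging'', ``padding'' or ``separator'' machinery; the single conjunction of membership constraints already does the job.

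The decidability half, however, has a genuine gap, and in fact you are attacking the wrong problem. You try to build an actual decision procedure for ``$L_x = S^*$'', splitting it into $L_x \subseteq S^*$ and $S^* \subseteq L_x$, and you yourself observe that both directions run aground: the inclusion test leads back to satisfiability for $\WE\+\REG\+\LEN$ (open), and the covering argument for $S^* \subseteq L_x$ is left as an unresolved ``crux''. The point you are missing is that for a \emph{proper} subalphabet $S$ with $1 < |S| < |\Sigma|$ --- which is exactly what the subset-universality problem asks about --- the language $S^*$ is simply \emph{not expressible} in $\WE\+\LEN$ at all. This follows from the pumping-style Lemma~\ref{lem:WELENSEP}(1) exactly as in the proof of Lemma~\ref{lem:subalphabetexample}: if $L_x = S^*$ were expressed by some formula, take $w = aba^2b\cdots a^db \in S^*$ with $d$ the constant of that lemma; its factorisation into maximal single-letter blocks has more than $d$ distinct blocks, so some block may be replaced (in all its occurrences) by an arbitrary word of the same length, in particular by one using a letter of $\Sigma \setminus S$, producing a word that must still lie in $L_x$ but is not in $S^*$ --- a contradiction. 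Hence the algorithm for subset-universality in $\WE\+\LEN$ is trivial: always answer ``no''. No Presburger reduction, no solution-scheme covering, and no analysis of coupled length constraints is needed; the inexpressibility result already established in Section~\ref{sec:separation} does all the work.
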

\begin{proof}
The undecidability result can be obtained as follows. Recall from Theorem~\ref{the:universalityliterature} that the standard universality problem for $\WE$ is undecidable. In particular, there is an alphabet $S$ and theory $\mathfrak{T}'$ from $\WE$ whose underlying alphabet is $S$ such that the universality problem for $\mathfrak{T}'$ is undecidable. Let $\Sigma \supset S$. We reduce the universality problem for $\mathfrak{T}'$ to the subset universality problem for the the theory $\mathfrak{T}$ from $\WE \+ \REG$ whose underlying alphabet is $\Sigma$. In particular, for a formula $f'$ from $\mathfrak{T}'$ with variables $x_1,x_2,\ldots,x_n$, we construct the formula $f = f' \land \bigwedge\limits_{1 \leq i \leq n} x_i \in S^*$. Clearly, $f$ and $f'$ have exactly the same set of satisfying assignments, and so the language expressed by a variable $x_i$ is unchanged. Consequently, the language expressed by $x_i$ in $f'$ (in $\mathfrak{T}'$) is universal if and only if the language expressed by $x_i$ in $f$ satisfies the subset-universality problem for the subset $S$ of $\Sigma$. Hence the subset universality problem is undecidable for $\WE \+ \REG$.

Next we consider the same problem for $\WE$ and $\WE \+ \LEN$. Note firstly that, for any theory $\mathfrak{T}$ from $\WE \+ \LEN$ whose underlying alphabet is $\Sigma$, if $1 < |S| < |\Sigma|$, then it follows from the same construction as in the proof of Lemma~\ref{lem:subalphabetexample} that $S^*$ is not expressible by any formula/variable from $\mathfrak{T}$. Thus in any such case we can simply automatically answer ``no''. 
\end{proof}

Theorem~\ref{the:subsetuniversality} allows us to apply Greibach's Theorem to many theories defined in Section~\ref{sec:theories}.

\begin{theorem}\label{the:greibachapplies}
Let $\mathfrak{F}$ be a family of theories defined in Section~\ref{sec:theories} containing $\WE \+ \REG$. For large enough alphabets $\Sigma$, if $\mathcal{C}$ is the class of languages expressible by the theory $\mathfrak{T} \in \mathfrak{F}$ with underlying alphabet $\Sigma$, then the conditions of Greibach's theorem are satisfied by $\mathcal{C}$.
\end{theorem}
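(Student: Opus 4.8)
The plan is to verify the four conditions of Greibach's theorem (Theorem~\ref{thm:greibachOriginal}) for the class $\mathcal{C}$ of languages expressible in a theory $\mathfrak{T}\in\mathfrak{F}$ with a sufficiently large underlying alphabet $\Sigma$, taking $\mathcal{P}$ to be the class of languages of the form $S^*$ for $S\subsetneq\Sigma$ with $1<|S|<|\Sigma|$ (together with all regular languages, so that $\mathcal{P}$ is closed as required), or more simply $\mathcal{P}$ being the regular languages augmented with these $S^*$ languages --- the precise choice of $\mathcal{P}$ should be made so that membership in $\mathcal{P}$ captures the subset-universality question shown undecidable in Theorem~\ref{the:subsetuniversality}. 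First I would observe that, since $\mathfrak{F}$ contains $\WE\+\REG$, every theory $\mathfrak{T}\in\mathfrak{F}$ can express all regular languages over $\Sigma\cup\{\#\}$ (where $\#$ is whichever fresh letter Greibach's setup reserves; we just need $\Sigma$ large enough to accommodate it and a two-letter subset $S$, hence the ``large enough'' hypothesis), giving condition~(1).

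Next I would address condition~(3), the effective closure under union and under one-sided concatenation with a regular language. Union is immediate: given formulas $f_1,f_2$ expressing $L_1$ via $x_1$ and $L_2$ via $x_2$, rename variables so the two formulas share no variables except a common fresh variable $x$, and form $(x = x_1 \land f_1)\lor(x = x_2 \land f_2)$; the language expressed by $x$ is $L_1\cup L_2$, and this is a legal $\WE\+\REG$-formula (hence a legal $\mathfrak{T}$-formula since $\mathfrak{T}$ syntactically contains $\WE\+\REG$). For $L_1 R$ with $R$ regular, introduce fresh variables $x, z$, take the formula $f_1$ expressing $L_1$ via $x_1$, and write $x = x_1 z \land z\in R \land f_1$; then $x$ expresses exactly $L_1 R$. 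The case $R L_1$ is symmetric. All of these constructions are clearly effective from the descriptions (NFAs/automata plus the formulas). Condition~(2), closure of $\mathcal{P}$ under quotient by a single letter, is a routine fact about the chosen $\mathcal{P}$: the quotient of $S^*$ by a letter $a$ is $S^*$ if $a\in S$ and $\emptyset$ otherwise, and regular languages are closed under quotient; so $\mathcal{P}$ is closed under left (and, with the symmetric formulation, right) quotient by a single letter.

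Finally, condition~(4) --- undecidability of the (subset-)universality problem for $\mathcal{C}$ --- is exactly what Theorem~\ref{the:subsetuniversality} supplies: subset-universality is undecidable for $\WE\+\REG$, and since $\mathfrak{T}$ syntactically contains $\WE\+\REG$, the reduction in that theorem embeds directly into $\mathfrak{T}$, so it is undecidable whether a language expressible in $\mathfrak{T}$ equals $S^*$ for the fixed two-letter subset $S$. (Equivalently, one phrases condition~(4) relative to the sub-alphabet $S$ in place of the full alphabet, which is precisely why the subset-universality variant was isolated.) Putting these together, $\mathcal{C}$ and $\mathcal{P}$ satisfy all hypotheses of Theorem~\ref{thm:greibachOriginal}.

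The main obstacle I anticipate is bookkeeping rather than depth: one must be careful that the closure constructions in condition~(3) really stay inside the syntactic fragment defining $\mathfrak{T}$ --- in particular that introducing the auxiliary word equations $x = x_1 z$ is permitted, which it is precisely because $\mathfrak{F}$ contains $\WE\+\REG$ and hence $\mathfrak{T}$ has word equations and concatenation --- and that the ``large enough $\Sigma$'' hypothesis is used consistently: we need room for the distinguished letter $\#$ of Greibach's framework \emph{and} a subset $S$ with $1<|S|<|\Sigma|$ so that the subset-universality instance from Theorem~\ref{the:subsetuniversality} lives inside $\Sigma$. A secondary subtlety is choosing $\mathcal{P}$ so that $\mathcal{P}\subsetneq\mathcal{C}$ genuinely holds and $\mathcal{P}\neq\emptyset$: since $\WE\+\REG$ expresses non-regular languages such as $\{ww\mid w\in\Sigma^*\}$, and every $S^*$ with $|S|\geq 2$ is expressible, the strict inclusion and non-emptiness are clear, but this should be stated explicitly.
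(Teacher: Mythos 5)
Your verification of the conditions pertaining to $\mathcal{C}$ --- all regular languages over $\Sigma\cup\{\#\}$ expressible via a single membership atom, effective closure under union via disjunction after variable renaming, effective closure under one-sided concatenation with a regular language via a fresh word equation $x = x_1 z \land z\in R$, and undecidability of universality over the distinguished sub-alphabet supplied by Theorem~\ref{the:subsetuniversality} --- is exactly the paper's argument, so the proof is correct and takes the same route. One caveat: the theorem only claims the conditions of Greibach's theorem that concern $\mathcal{C}$, deliberately leaving $\mathcal{P}$ unspecified so that each later application (Theorems~\ref{the:eliminatinglength} and~\ref{the:greibachpumping}) can instantiate its own $\mathcal{P}$ and check condition~(2) there; in particular condition~(4) asks whether $L=\Sigma^*$ for $L\in\mathcal{C}$ (universality over the sub-alphabet $\Sigma$ of the full alphabet $\Sigma\cup\{\#\}$) and has nothing to do with membership in $\mathcal{P}$, so your plan to pick $\mathcal{P}$ ``so that membership in $\mathcal{P}$ captures the subset-universality question'' misreads the roles of the two classes, although this does no harm since you ultimately verify condition~(4) correctly as a statement about $\mathcal{C}$.
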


\begin{proof}
Recall that we consider descriptions of languages expressible in a theory to be a formula $f$ together with a variable $x$ occurring in $f$. If $f_1,x_1$ and $f_2,x_2$ describe languages $L_1$ and $L_2$ respectively, then $L_1 \cup L_2$ is described by the formula $f_1 \lor f_2'$ where $f_2'$ is obtained by renaming the variables in $f_2$ such that $x_2$ is renamed to $x_1$ and all other variables in $f_2$ do not occur in $f_1$. 

Now, suppose we have a family of theories $\mathfrak{F}$ which includes $\WE \+ \REG$. Let $\mathfrak{T} \in \mathfrak{F}$ have underlying alphabet $\Sigma \cup \{\#\}$ for ``large'' $\Sigma$. Then every regular language $R$ over $\Sigma \cup \{\#\}$ is expressible via the variable $x$ in the formula $x \in R$. Moreover for any language $L$ expressible by a formula $f$ and variable $x$, we may express the languages $LR$ and $RL$ through the variable $y$ in the formulas $f \land y = x z \land z \in R$ and $y = f \land z x \land z \in R$ respectively where $y,z$ are new variables not in $f$.

Finally, we note that by Theorem~\ref{the:subsetuniversality}, the problem of whether a given language expressible in $\mathfrak{T}$ is exactly $\Sigma^*$ is undecidable. Thus, we have shown all the conditions of Greibach's theorem applicable to $\mathcal{C}$ hold when $\mathcal{C}$ is the class of languages expressible in $\mathfrak{T}$.
\end{proof}

In the following, we give an example application of Theorem~\ref{the:greibachapplies} with respect to the pumping lemma for regular languages (see e.g.~\cite{HopcroftUllman}). Aside from defining an interesting superclass of the regular languages itself, there are many reasons to be interested in notions of pumping. For example, when considering (in)expressibility questions (even beyond the regular languages), as well as part of a strategy for producing satisfiability results in the context of length constraints or other restrictions. We use the pumping lemma for regular languages because it is well known, but the ideas are easily adapted to other useful notions of pumping and closure properties more generally. We recall first this lemma. 
\begin{lemma}[\cite{HopcroftUllman}]
Let $L$ be a regular language. Then there exists a constant $c$ such that for every $w \in L$ with $|w| > c$, there exist $x,y,z$ such that
(i) $|xy| < c$, and
(ii) $w = xyz$, and
(iii) $xy^nz \in L$ for all $n \in \mathbb{N}_0$.
\end{lemma}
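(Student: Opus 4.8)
The plan is to give the classical automata-theoretic argument based on the pigeonhole principle. Since $L$ is regular, I would fix a deterministic finite automaton $A = (Q, q_0, F, \delta)$ with $L(A) = L$, and let $\delta$ also denote its extension to words, so that $\delta(q_0, u)$ is the state reached after reading $u$ from $q_0$. Set $c = |Q| + 1$. The intuition is that any accepted word of length greater than $c$ forces the run of $A$ to revisit some state, and the factor read between the two visits corresponds to a loop in $A$ which may be traversed any number of times (including zero) without changing the state in which the computation ends.

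Concretely, I would take $w = a_1 a_2 \cdots a_m \in L$ with $m = |w| > c$, and consider the run $q_0, q_1, \ldots, q_m$ of $A$ on $w$, where $q_i = \delta(q_{i-1}, a_i)$ for $1 \le i \le m$ and $q_m \in F$. Among the first $|Q| + 1$ states $q_0, q_1, \ldots, q_{|Q|}$ (these exist since $m > c > |Q|$), the pigeonhole principle yields indices $0 \le i < j \le |Q|$ with $q_i = q_j$. I would then put $x = a_1 \cdots a_i$, $y = a_{i+1} \cdots a_j$, and $z = a_{j+1} \cdots a_m$, so that $w = xyz$ and $|xy| = j \le |Q| = c - 1 < c$; this gives (i) and (ii), and incidentally $y \neq \varepsilon$ because $i < j$.

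For (iii), the key observations are $\delta(q_0, x) = q_i$, $\delta(q_i, y) = q_j = q_i$, and $\delta(q_i, z) = q_m \in F$. A straightforward induction on $n$ then shows that $\delta(q_0, x y^n) = q_i$ for every $n \in \mathbb{N}_0$, the base case $n = 0$ being exactly $\delta(q_0, x) = q_i$; hence $\delta(q_0, x y^n z) = \delta(q_i, z) = q_m \in F$, i.e. $x y^n z \in L$ for all $n \in \mathbb{N}_0$.

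There is no genuine obstacle here: the whole argument is a direct application of the pigeonhole principle to the states visited along a run of a DFA. The only points requiring a little care are choosing the constant so that the strict inequality $|xy| < c$ in (i) actually holds (taking $c = |Q| + 1$ achieves this), and not forgetting the ``pumping down'' instance $n = 0$, which is subsumed by the same induction. One could alternatively phrase the argument via the Myhill--Nerode relation, but the run-of-a-DFA formulation is the most economical.
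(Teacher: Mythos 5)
Your proof is correct and is precisely the standard pigeonhole-on-DFA-states argument from the cited reference; the paper states this lemma without proof, simply recalling it from~\cite{HopcroftUllman}. Your choice of $c = |Q|+1$ correctly delivers the strict inequality $|xy| < c$ as stated, and the rest is the textbook induction.
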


Now, we can show how Theorem~\ref{the:greibachapplies} can be applied in this context.
\begin{theorem}\label{the:greibachpumping}
It is undecidable whether a language expressed by a formula in a theory from $\WE + \REG$ satisfies the pumping lemma for regular %(resp. deterministic context-free) 
languages.
\end{theorem}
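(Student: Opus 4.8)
The plan is to apply Greibach's theorem (Theorem~\ref{thm:greibachOriginal}) via Theorem~\ref{the:greibachapplies}. Fix a large enough alphabet $\Sigma \cup \{\#\}$ so that the conditions of Greibach's theorem are satisfied by the class $\mathcal{C}$ of languages expressible in the theory $\mathfrak{T}$ from $\WE \+ \REG$ with that underlying alphabet, as guaranteed by Theorem~\ref{the:greibachapplies}. Let $\mathcal{P}$ be the sub-class of those languages in $\mathcal{C}$ that satisfy the pumping lemma for regular languages. To conclude, I need to check that $\mathcal{P}$ meets the remaining hypotheses of Greibach's theorem that are not already handled by Theorem~\ref{the:greibachapplies}: namely that $\mathcal{P}$ is a proper nonempty subclass of $\mathcal{C}$, that $\mathcal{P}$ contains all regular languages over $\Sigma \cup \{\#\}$, and that $\mathcal{P}$ is closed under quotient by a single letter.

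For nonemptiness and containment of the regular languages: every regular language trivially satisfies the pumping lemma, so $\mathcal{P}$ contains all regular languages over $\Sigma \cup \{\#\}$; in particular $\mathcal{P} \neq \emptyset$. For properness $\mathcal{P} \subsetneq \mathcal{C}$: the language $\{x x \mid x \in \Sigma^*\}$ (or, if one prefers a language genuinely over $\Sigma \cup \{\#\}$, a marked variant $\{x \# x\}$) is expressible in $\WE$ and hence in $\mathfrak{T}$, and a standard argument shows it fails the pumping lemma for regular languages, so it lies in $\mathcal{C} \setminus \mathcal{P}$. For closure of $\mathcal{P}$ under left quotient by a single letter $a$: if $L$ satisfies the pumping lemma with constant $c$, then I claim $a^{-1}L$ satisfies it with constant $c+1$ — given $w \in a^{-1}L$ with $|w| > c+1$, we have $aw \in L$ with $|aw| > c+1 > c$, so $aw = xyz$ with $|xy| < c$, $y \neq \varepsilon$, and $xy^nz \in L$ for all $n$; since $|xy| < c \le |aw|-1$ we can take $x$ nonempty-enough-that the leading $a$ sits inside $x$ wait — more carefully, since $aw$ begins with $a$ and $|xy|<c$, unless $c=1$ the prefix $x$ contains that leading $a$, so write $x = ax'$ and then $x' y^n z \in a^{-1}L$ for all $n$ with $|x'y| < c$, giving the pumping property for $a^{-1}L$ (the boundary case $x = \varepsilon$, where $y$ starts with $a$, is handled by re-associating one letter of $y$ into $x$, using $y \neq \varepsilon$).

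The main obstacle I anticipate is this last point: verifying cleanly that $\mathcal{P}$ is closed under quotient by a single letter, since the pumping decomposition of $aw$ need not respect the deletion of the leading letter, and one must carefully shuffle a letter between the $x$ and $y$ parts (and treat the small-constant edge cases) to recover a valid decomposition of $w$. Everything else — nonemptiness, containment of regular languages, properness via a word-equation-expressible non-regular-pumping language, and the union/concatenation-with-regular conditions — follows immediately from Theorem~\ref{the:greibachapplies} together with elementary observations. Once all hypotheses are verified, Greibach's theorem yields that membership in $\mathcal{P}$, i.e.\ the question of whether a language expressed by a formula in $\mathfrak{T}$ satisfies the pumping lemma for regular languages, is undecidable.
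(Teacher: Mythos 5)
Your overall strategy is exactly the paper's: invoke Theorem~\ref{the:greibachapplies} and then check that the subclass $\mathcal{P}$ of pumpable languages satisfies the remaining hypotheses of Greibach's theorem (you even add the properness check $\mathcal{P} \subsetneq \mathcal{C}$ via $\{xx \mid x \in \Sigma^*\}$, which the paper leaves implicit). However, the one step you flagged as the main obstacle --- closure of $\mathcal{P}$ under quotient by a single letter --- is where your argument genuinely breaks. You work with the \emph{left} quotient $a^{-1}L$, and your treatment of the case $x = \varepsilon$ is not repairable: if $aw = yz$ with $y = ay'$, the words guaranteed to lie in $L$ are $(ay')^iz$, so the words guaranteed to lie in $a^{-1}L$ are $(y'a)^{i-1}y'z$; here $w = y'z$ is the \emph{shortest} member of that family, and it cannot be written as $\tilde{x}\tilde{y}\tilde{z}$ with $\tilde{x}\tilde{y}^i\tilde{z} = (y'a)^{i-1}y'z$ unless $z$ happens to start with $a$. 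Your proposed fix of ``re-associating one letter of $y$ into $x$'' sets $x_{\mathrm{new}} = a$, $y_{\mathrm{new}} = y'$, but then $x_{\mathrm{new}}y_{\mathrm{new}}^iz = a(y')^iz$ is not known to be in $L$ (and $y_{\mathrm{new}}$ may be empty). In fact the claim you need is false: take $L = \{b^m \mid m \geq 1\} \cup \{a^ib^{n!} \mid i \geq 1,\, n \geq 1\}$. Every sufficiently long word of $L$ can be pumped in its first letter (pumping the leading $a$ down lands in $\{b^m\}$, pumping a leading $b$ stays in $\{b^m\}$), so $L$ satisfies the pumping lemma; but $a^{-1}L = \{a^kb^{n!} \mid k \geq 0,\, n \geq 1\}$ contains the words $b^{n!}$, which are not pumpable. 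So the pumpable languages are \emph{not} closed under left quotient by a letter.

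The repair is to use the \emph{right} quotient $L/a = \{w \mid wa \in L\}$, which is what Greibach's theorem (as in Hopcroft--Ullman, and as used in the paper via the equation $ya = x$) actually requires. There the argument is clean and has no edge cases: for $w \in L/a$ with $|w| > c$, decompose $wa = xyz$; since $|xy| < c < |wa|$ the suffix $z$ is nonempty and hence ends in the removed letter $a$, so $z = z'a$ and $xy^nz' \in L/a$ for all $n$ with the same constant. The asymmetry of the pumping lemma (the bound $|xy| < c$ constrains the pump from the left) is precisely why the two quotients behave differently here, so this is not a cosmetic issue; with the right quotient substituted, the rest of your proof goes through and coincides with the paper's.
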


%We can now give the proof of Theorem~\ref{the:greibachpumping}.
\begin{proof}
By Theorem~\ref{the:greibachapplies}, there exist theories $\mathfrak{T}$ from $\WE\+ \REG \+ \LEN$ such that the class $\mathcal{C}$ of languages expressible in $\mathfrak{T}$ satisfy the conditions of Greibach's theorem. It remains to observe that the property $\mathcal{P}$ of being a language which is expressible in $\mathfrak{T}$ which also satisfies the pumping lemma for regular languages also satisfies the conditions of Greibach's theorem. In particular, we note that all regular languages over the underlying alphabet belong to $\mathcal{P}$ since all regular languages satisfy the pumping lemma for regular languages. Moreover, it is straightforward to show that given a letter $a$, and a language $L$ satisfying the pumping lemma, that the quotient of $L$ by $a$ also satisfies the pumping lemma. Since we can express the same quotient in $\WE \+ \REG$ with the formula $f \land y a = x$ where $f$ and $x$ are the formula and variable expressing $L$ respectively, we conclude that $\mathcal{P}$ is closed under quotient by a single letter. Thus for an appropriate choice of underlying alphabet for $\mathfrak{T}$, we may apply Greibach's theorem to get the claimed result.
\end{proof}

Theorem~\ref{the:subsetuniversality} also tells us that we cannot use Greibach's theorem as stated to show that properties of languages expressible in $\WE \+ \LEN$ are undecidable. We leave as an open problem whether an equivalent of Greibach's theorem can be adapted to this context:

\begin{openproblem}
Is there an equivalent of Theorem \ref{thm:greibachOriginal} for the classes of languages expressible in $\WE \+ \LEN$ or $\WE$?\looseness=-1
\end{openproblem}

\section{Expressivity Problems}
\label{sec:expressivity}

Further, we consider decision problems related to expressivity. These problems have the general form: given a language $L$ expressed by a formula in a theory $\mathfrak{T}_1$ and given a second theory $\mathfrak{T}_2$, can we decide whether or not $L$ can be expressed by a formula in $\mathfrak{T}_2$? 

We begin by noting that since it is decidable whether or not a deterministic context-free language is regular (see~\cite{stearns1967,valiant1975}), the same holds true for visibly pushdown languages, and hence whether a language expressed in $\VPL$ can be expressed in $\REG$.
Therefore, it is clearly decidable whether a language expressed in $\VPL$ is expressible in $\REG$. The same holds for theories from families equivalent to $\VPL$ and $\REG$ under the relation~$\sim$.

Naturally, since we have already seen that $\VPL \+ \CON$ is capable of expressing all $\RE$-languages, it is undecidable whether a language expressed in a theory from $\VPL \+ \CON$ is expressible in a theory from any of the families which have strictly less expressive power.

The separation results from Section~\ref{sec:separation} and Theorem~\ref{the:greibachapplies} together mean we can get the following negative results as a consequence of Greibach's theorem. They have a particularly relevant interpretation in the context of string solving in practice. Specifically, it is often the case that string-solvers will perform some pre-processing of string constraints in order to put them in some sort of normal form which will make them easier to solve. One natural thing to want to do in this process is to reduce the number of combinations of sub-constraints of differing types by converting constraints of one type to another. This is useful particularly in cases where the combinations are difficult to deal with together in general. Word equations, regular constraints and length constraints are one such combination (recall from Section~\ref{sec:separation} that satisfiability for the corresponding theory including all three types of constraint is an open problem, but if length constraints are removed then satisfiability becomes decidable). Unfortunately, the following theorem reveals that we cannot in general decide whether length constraints can be eliminated by rewriting them using only regular membership constraints and word equations.\looseness=-1

\begin{theorem}\label{the:eliminatinglength}
It is undecidable whether a language expressed in $\WE+\REG + \LEN$ can be expressed in $\WE+\REG$.
\end{theorem}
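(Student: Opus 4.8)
The plan is to apply Greibach's theorem (in the form of Theorem~\ref{the:greibachapplies}) with $\mathcal{C}$ the class of languages expressible in a theory $\mathfrak{T}$ from $\WE \+ \REG \+ \LEN$ over a suitably large alphabet $\Sigma$, and with $\mathcal{P}$ the subclass of those languages that are *also* expressible in $\WE \+ \REG$ (over the same alphabet). By Theorem~\ref{the:greibachapplies}, since $\WE\+\REG\+\LEN$ contains $\WE\+\REG$ as a syntactic subset, the class $\mathcal{C}$ satisfies all the hypotheses of Greibach's theorem (conditions 1, 3 and 4) for large enough $\Sigma\cup\{\#\}$. So the burden of the proof is to verify the remaining conditions on $\mathcal{P}$: that $\mathcal{P}$ is a nonempty proper subclass of $\mathcal{C}$, that $\mathcal{P}$ contains all regular languages over $\Sigma\cup\{\#\}$, and that $\mathcal{P}$ is closed under quotient by a single letter. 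Once these are checked, Greibach's theorem yields that membership in $\mathcal{P}$ — i.e.\ expressibility in $\WE\+\REG$ — is undecidable for languages given by formulas of $\WE\+\REG\+\LEN$, which is exactly the statement.

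First I would observe $\mathcal{P}\neq\emptyset$ and $\mathcal{P}\supseteq\REGlang$ over $\Sigma\cup\{\#\}$: every regular language $R$ is expressed by $x$ in the formula $x\in R$, which lives in $\WE\+\REG$, and such a formula is trivially also in $\WE\+\REG\+\LEN$, so $R\in\mathcal{P}\subseteq\mathcal{C}$. Next, $\mathcal{P}\subsetneq\mathcal{C}$: by Lemma~\ref{lem:WELENnotWEREG}(2), the language $L=\{ucv \mid u,v\in\{a,b,c\}^*,\ |u|=|v|\}$ is expressible in $\WE\+\LEN$ (hence in $\WE\+\REG\+\LEN$, so $L\in\mathcal{C}$) but is not expressible in $\WE\+\REG$ (so $L\notin\mathcal{P}$) — here I take $\Sigma\supseteq\{a,b,c\}$ and add the extra separator letter $\#$ as required by Greibach's setup. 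Closure of $\mathcal{P}$ under quotient by a single letter is the one genuinely new point: if $L$ is expressed by variable $x$ in a formula $f$ of $\WE\+\REG$, then for a letter $a$ the left quotient $a^{-1}L=\{w \mid aw\in L\}$ is expressed by the new variable $y$ in $f \land x = a y$ (renaming $x$ away if needed), which is again a $\WE\+\REG$ formula; the right quotient $La^{-1}$ is handled symmetrically by $f\land x = y a$. Hence $\mathcal{P}$ is closed under quotient by a single letter, as Greibach requires. Putting these facts together with the hypotheses on $\mathcal{C}$ supplied by Theorem~\ref{the:greibachapplies}, all conditions of Theorem~\ref{thm:greibachOriginal} hold, and the theorem follows.

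The main obstacle — though it is largely already handled by the machinery built up in the paper — is making sure that the classes $\mathcal{C}$ and $\mathcal{P}$ are taken over the *same* fixed alphabet $\Sigma\cup\{\#\}$ and that this alphabet is simultaneously ``large enough'' for Theorem~\ref{the:greibachapplies} to apply and large enough for the separating language $L$ of Lemma~\ref{lem:WELENnotWEREG}(2) to witness $\mathcal{P}\subsetneq\mathcal{C}$ (which needs at least the three letters $a,b,c$ plus the marker $\#$). One should also double-check that the closure properties Greibach's theorem demands of $\mathcal{C}$ — union, and concatenation with regular languages on either side — are genuinely the ones established in the proof of Theorem~\ref{the:greibachapplies}, which they are, since those constructions ($f_1\lor f_2'$ for union, and $f\land y = xz \land z\in R$ for concatenation) stay inside $\WE\+\REG$ and a fortiori inside $\WE\+\REG\+\LEN$. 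No delicate combinatorics on words is needed here beyond invoking Lemma~\ref{lem:WELENnotWEREG}(2); the theorem is essentially a clean corollary of the Greibach-theorem framework plus that one separation result.
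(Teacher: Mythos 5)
Your proposal is correct and follows essentially the same route as the paper: invoke Theorem~\ref{the:greibachapplies} for the conditions on $\mathcal{C}$, then verify that $\mathcal{P}$ (expressibility in $\WE\+\REG$) contains the regular languages and is closed under quotient by a single letter via a formula of the form $f \land x = ya$. Your explicit check that $\mathcal{P}\subsetneq\mathcal{C}$ via Lemma~\ref{lem:WELENnotWEREG}(2) is a point the paper leaves implicit (deferring to the separation results of Section~\ref{sec:separation}), so if anything your write-up is slightly more complete.
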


\begin{proof}
Theorem~\ref{the:greibachapplies} tells us that there exist theories $\mathfrak{T}$ from $\WE \+ \REG \+ \LEN$ such that the class $\mathcal{C}$ of languages expressible in $\mathfrak{T}$ satisfy the conditions of Greibach's theorem. It remains to observe that there exist theories $\mathfrak{T}'$ from $\WE \+ \REG$ such that the property  $\mathcal{P}$ of being a language expressible in $\mathfrak{T}'$ also satisfies the conditions of Greibach's theorem. 

In particular, we note that all regular languages over the underlying alphabet belong to $\mathcal{P}$ trivially through the use of regular language membership constraints. Moreover, it is straightforward that given a letter $a$, and a language $L$ expressible in $\mathfrak{T}'$, we can express the quotient of $L$ by $a$ in $\mathfrak{T}'$ with the formula $f \land y a = x$ where $f$ and $x$ are the formula and variable expressing $L$ respectively and where $y$ is a new variable. Thus $\mathcal{P}$ is closed under quotient by a single letter. For an appropriate choice of $\mathfrak{T}'$ (i.e. for the appropriate underlying alphabet), we may hence apply Greibach's theorem to get the claimed result.
\end{proof}

%In fact the same proof as for Theorem~\ref{the:eliminatinglength} works whenever we have two families $\mathfrak{F}_1$ and $\mathfrak{F}_2$ satisfying:
%\begin{itemize}
%    \item $\WE \+ \REG$ is a syntactic subset of $\mathfrak{F}_1$, and
%    \item $ \mathfrak{F}_2 \notin \{\CF, \CF \+ \LEN, \WE, \WE \+ \LEN\}$, and
%    \item $\mathfrak{F}_1 \not\preceq \mathfrak{F}_2$.
%\end{itemize}
%If the above three conditions are satisfied, then we may use Greibach's theorem to show that it is undecidable whether a language expressed by a formula from a theory from $\mathfrak{F}_1$ can be expressed in $\mathfrak{F}_2$. For example, it is undecidable whether a language expressed by a formula from $\WE \+ \VPL$ is expressible by a formula from $\VPL$. 

%We get a similar result for the combination of word equations and VPL-membership constraints, and naturally also for word equations and deterministic CF membership-constraints.

%\begin{theorem}
%It is undecidable whether a language expressed by a formula from $\WE\+\VPL$ can be expressed by a formula from $\WE+\REG +\LEN$ or $\WE+\REG$. It is undecidable whether a language expressed by a formula from $\WE\+\CF$ can be expressed by a formula from $\WE\+\VPL$, $\WE+\REG +\LEN$ or $\WE+\REG$. 

%\end{theorem}

The same undecidability result holds if, instead of removing length constraints by rewriting them as regular membership constraints and word equations, we want to remove word equations constraints by rewriting them as regular language membership constraints (possibly also with length constraints which, in the absence of word equations, do not increase the expressive power due to Theorem~\ref{the:regulartheories}). While this result can also be obtained via Greibach's theorem, we can, in fact, state a stronger version for which we need a novel approach. In particular, we show that it is already undecidable whether a language expressible by word equations (without additional constraints) is a regular language (i.e., can be expressed in $\REG$).\looseness=-1

\begin{theorem}\label{the:undecifWEisREG}
It is undecidable whether a language expressed in $\WE$ is regular. 
In other words, it is undecidable whether a language expressed by a formula from $\WE$ is regular. 
\end{theorem}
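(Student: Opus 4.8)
The plan is to reduce from an undecidable problem about word equations---specifically, I would aim to reduce from the universality problem for $\WE$, which is undecidable by Theorem~\ref{the:universalityliterature}, or from a Turing-machine halting/Post Correspondence-style encoding, depending on which gives cleaner control over regularity. The key difficulty is that Greibach's theorem does not obviously apply here: the class of languages expressible in $\WE$ is not known to have all the closure properties (e.g.\ closure under quotient, or union with regular languages while staying \emph{inside} $\WE$) required by Theorem~\ref{thm:greibachOriginal}, and more importantly regularity is a property that can be destroyed or created by the extra syntactic material one would add in a Greibach-style reduction. So a direct, hand-crafted reduction is needed, which is the ``novel approach'' the excerpt alludes to.

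The approach I would take is as follows. First, I would fix an alphabet and a family of word equations $\mathcal{E}_M$, parametrised by an instance $M$ of an undecidable problem (say a Turing machine or a PCP instance), such that the language expressed by a designated variable $x_M$ in $\mathcal{E}_M$ is ``small/degenerate'' (e.g.\ finite, or empty, or some fixed regular set) when $M$ is a positive instance, and ``genuinely non-regular'' (e.g.\ contains a diagonal set like $\{ww \mid w \in \Delta^*\}$ or $\{a^nb^n\}$-type witnesses, which are expressible in $\WE$ by the results of~\cite{karhumaki2000}) when $M$ is a negative instance. The crucial design constraint is that the construction must be \emph{one-sided clean}: the non-regularity in the ``bad'' case must be unavoidable, and the ``good'' case must land squarely inside $\REGlang$. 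A natural way to achieve the non-regular witness is to encode accepting/halting computations as strings and force $x_M$ to range over (a suitable projection of) the set of valid computation histories; since word equations can check that two blocks are equal, one gets a copy-language flavour, which is non-regular, precisely when valid histories exist.

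The key steps, in order, would be: (1) Choose the source undecidable problem and recall/establish that $\WE$ can express the needed non-regular ``gadget'' language and can perform the needed consistency checks (drawing on the expressibility results of~\cite{karhumaki2000} and the techniques in Theorem~\ref{thm:VPLCONequalsRE} for encoding configurations, though here without the VPL machinery). (2) Build $\mathcal{E}_M$ and the variable $x_M$ so that $L_{x_M}$ is regular iff $M$ is a positive instance; this requires arguing both directions---regularity in the positive case (often because $L_{x_M}$ collapses to something finite or a simple regular pattern) and non-regularity in the negative case (via a pumping-lemma argument, or by showing $L_{x_M}$ has infinitely many Myhill--Nerode classes because it records unbounded matched data). (3) Conclude that a decision procedure for ``is $L_{x_M}$ regular?'' would decide the source problem, contradiction.

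The main obstacle I expect is step (2)'s negative direction combined with the \emph{genuine} one-sidedness of the reduction: word equations are quite rigid, and it is easy to accidentally make the expressed language regular even when one intends it not to be (for instance, if the set of computation histories, once projected onto the single output variable, loses the diagonal structure and becomes regular anyway), or conversely to leak non-regularity into the ``good'' case. Getting an encoding where the projection onto a single variable \emph{provably} retains a non-regular core exactly in the negative case---and doing so using only plain word equations, with no regular or length constraints to help---is the delicate heart of the argument, and is presumably why the authors flag that ``other approaches are needed'' and that this is a ``stronger version for which we need a novel approach.''
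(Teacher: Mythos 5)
Your high-level plan (encode computation histories of a machine into a word-equation formula so that the expressed language is regular exactly in the decidable-looking case) is the right genre of argument, and you correctly diagnose that Greibach's theorem is unavailable and that the delicate point is controlling regularity on both sides of the reduction. However, as stated, your step (2) would fail, and the missing idea is precisely the one the paper's proof turns on: you propose to ``force $x_M$ to range over (a suitable projection of) the set of valid computation histories,'' but the set of valid computation histories is not expressible in $\WE$. Validity is a universally quantified condition (\emph{every} consecutive pair of configurations must respect the transition function, at unboundedly many positions), and a quantifier-free Boolean combination of word equations, with its fixed finite set of variables, cannot enforce such a global conjunction. What \emph{is} expressible is the \textbf{complement}: a word fails to be a valid history iff \emph{there exists} one local violation (a forbidden pair of adjacent letters, a wrong initial or final configuration, or one consecutive pair of configurations violating $\delta$), and each such failure mode is captured by an existential factorisation $x = u(\text{bad part})v$ together with commutation equations like $y0=0y$ to pin down unary blocks. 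The paper therefore expresses the complement of the set $S_M$ of valid histories of a (deterministic) two-counter machine on inputs from $0^+$.

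The second ingredient you would still need, and which your sketch does not supply, is the bridge from an undecidable problem to regularity of that complement. The paper reduces from finiteness of $L(M)\cap 0^+$ for a 2CM $M$ (undecidable by Rice's theorem) and proves: $S_M$ is finite iff $S_M$ is regular (if $S_M$ were infinite and regular, pumping a long history would either break its syntactic shape or produce two distinct accepting histories for the same input, contradicting determinism); and regularity is preserved under complement, so the expressed language (the complement of $S_M$) is regular iff $L(M)\cap 0^+$ is finite. Note that in the ``regular'' case the expressed language is co-finite, not finite as your sketch suggests. Without the complementation step and the finiteness-iff-regularity lemma, the reduction as you outline it does not go through.
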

\begin{proof}
We shall prove the statement by giving a reduction from the problem of determining whether or not the set of words belonging to $0^+$ accepted by a 2-Counter Machine (2CM) is finite. Since we shall use word equations to model computations of 2CMs, our proof has a similar flavour to the one in~\cite{Ganesh2012WordEW}, but since our aims and setting are different, the details and our construction is also necessarily different.

A 2CM $M$ is a deterministic finite state machine with 3 semi-infinite storage tapes, each with a leftmost cell but no rightmost cell. One is the input tape, on which the input is initially placed. There is a read-only head which can move along the input tape in both directions but cannot move beyond the input word and cannot overwrite it. The other two tapes represent counters. They each store a non-negative integer represented by the position of a head which can move to the left or right. If the head is in the leftmost position, the number represented is 0, and increments of one are achieved by moving the head one position to the right.

$M$ can test if each counter is empty but cannot compare directly the stored numbers for equality. It accepts a word if the computation with that word as input terminates in an accepting state and such that all tape heads (input and both counters) are at the leftmost position. Formally, a 2CM is a tuple $(Q, \Delta, \delta, q_0, F)$ where:
\begin{enumerate}
    \item $Q$ is a finite set of states, $q_0 \in Q$ is an initial state and $F \subseteq Q$ is a set of final or accepting states.
    \item $\Delta$ is a finite input tape alphabet.
    \item $\delta : Q \times \Delta \times \{T,F\} \times \{T,F\} \to Q \times \{1,2,3\} \times \{L,R\}$ is a transition function.
\end{enumerate}
The interpretation of the transition function is as follows: $\delta(q,a,Z_1,Z_2) = (q',i,D)$ if before the transition $M$ is in state $q$ and currently reads letter $a$ on the input tape, and $Z_1$ and $Z_2$ are $T$ if the first and second counters are $0$ respectively and $F$ otherwise, and after the transition $M$ is in state $q'$, $D$ indicates the direction in which one of the tape heads moves (L for left and R for right), and $i$ determines which tape head moves (1 for input head and 2 and 3 for the first and second counters respectively).

We can represent a configuration of a 2CM at any point in a computation as a word belonging to $Q \Delta^* a^+ b^+ c^+$ (assuming $a,b,c$ are new letters such that $Q, \Delta, \{a,b,c\}$ are pairwise disjoint) where the leftmost letter (from $Q$ is the current state, the part from $\Delta^*$ stores the contents of the input tape (so, the input), and the a's b's and c's denote in unary notation the position of the input tape head and the values of the two counters. For convenience, we add one to all these values so that the sequences of a's, b's and c's are all non-empty.

An initial configuration on input $w\in \Delta^*$ has the form $q_0wabc$, and a final configuration has the form $q_fwabc$ where $q_f \in F$.

A valid computation history of a 2CM $M$ on input word $w$ is a finite word $C = C_1C_2C_3\ldots C_n$ such that each $C_i$ is a configuration, $C_1$ is the initial configuration for the input $w$, $C_n$ is a final configuration, and such that each successive pair of configurations $C_i,C_{i+1}$ respects the transition function $\delta$ of $M$.

Now we can explain the first main step of our proof.

It is well known that 2CMs can simulate the computations of Turing Machines, and therefore that they accept the class of recursively enumerable languages. Hence, we obtain from Rice's theorem that it is undecidable whether the language accepted by a 2CM contains infinitely many words from $\{0\}^+$ or not, where $0 \in \Delta$. Moreover, since 2CMs are deterministic, each word accepted by a given 2CM has exactly one valid computation history. So, it follows that the set of words from $\{0\}^+$ accepted by a 2CM $M$ is finite if and only if the set $S_M = \{ C \mid C $ is a valid computation history for  $M$  on  some input word  $w \in \{0\}^+\}$ is finite.

Moreover, it is easily seen that the set $S_M$ is finite if and only if it is regular. Clearly, if it is finite, it is regular. To see the converse, suppose for contradiction that it is both infinite and regular. Then there is a DFA accepting $S_M$, and moreover, there exist arbitrarily long words in $S_M$. If we choose a word which is sufficiently long, then there must exist distinct positions in that word which occur after the initial configuration and such that the DFA accepting $S_M$ must be in the same state after reading the prefix up to those positions. It follows that the part between these two positions can be pumped any number of times without affecting acceptance (and hence membership in $S_M$). However, this would either disrupt the correct form of a valid computation history for the particular input word (which is fixed by the first configuration), or it would result in infinitely many valid computation histories for the same input word, which contradicts the fact that $M$ is deterministic. In both cases, we get a contradiction, so if $S_M$ is infinite, it is not regular.

We can now proceed with the second main part of our proof, which is essentially a computation-simulation. Interestingly, our tools for this part of the proof are fundamentally different from those used in Theorem \ref{thm:VPLCONequalsRE}, as we now have to rely on techniques rooted in combinatorics on words (and word equations) rather than on techniques related to (visibly pushdown) automata. 

Next, we note that $S_M$ is regular if and only if its complement is regular. In what remains, we shall construct, for any given 2CM $M$, a $\WE$ formula $f$ containing a variable $x$ such that the language expressed by $x$ in $f$ is exactly the complement of $S_M$. This construction facilitates a reduction from the finiteness problem described at the beginning of the proof to the problem of whether or not the language expressed by a variable in a $\WE$ formula is regular.

Let us fix a 2CM $M$. We construct the formula $f$ as the disjunction of 4 subformulas, each of which accounts for a particular way in which a word substituted for $x$ could violate the definition of a valid computation history of $M$ on an input of the form $0^+$. Let $x,y_1,y_2,y_3,y_4,z_1,z_2,z_3,z_4,u,u',v,v',w,w'$ be variables.

Throughout the construction we shall repeatedly use the well-known fact (see the Defect Theorem in~\cite{lothaire1997} that for two words $w_1,w_2$, we have $w_1w_2 = w_2 w_1$ if and only if they are repetitions of the same word, that is there exists a word $w_3$ and $p,q \in \mathbb{N}$ such that $w_1 = w_3^p$ and $w_2 = w_3^q$.

Now, $f$ is the formula
\[w 0 = 0 w \land w= 0 w' \land (f_1 \lor f_2 \lor f_3 \lor f_4)\]
where $f_1,f_2,f_3,f_4$ are defined below. The first two conjuncts enforce that $w \in 0^+$. This allows us to use  $w$ to represent the input word in the rest of the formula. $f_1$ will be satisfiable for a given value of $x$ if $x$ does not belong to $q_0 0^+a^+b^+c^+(Q0^+a^+b^+c^+)^*$, and thus that it is not a sequence of configurations of $M$ starting in an initial state. $f_2$ and $f_3$ will cover the cases when $x$ does not start with the initial configuration for $M,w$ and when $x$ does not end with a final configuration respectively. Finally $f_4$ will cover the case that two consecutive configurations in $x$ do not respect the transition relation $\delta$.

Let $P$ be the set of pairs of letters which may not occur consecutively in $(Q0^+a^+b^+c^+)^+$. That is, $P$ is the complement of $Q0 \cup cQ \cup \{00, 0a, aa, ab, bb, bc, cc\}$. Note that $x$ is not in the language $q_0 0^+a^+b^+c^+(Q0^+a^+b^+c^+)^*$ if and only if it contains consecutive letters included in $P$ or it starts with a letter other than $q_0$ or it ends with a letter other than $c$. Note that $P$ is finite. Thus the subformula $f_1$ is given by
\[ \bigvee\limits_{AB \in P} x = u AB v \lor \bigvee\limits_{A \in \Sigma \backslash \{q_0\}} x = A u \lor \bigvee\limits_{A \in \Sigma \backslash F} x = u A \lor x = \varepsilon.\]

With $f_2$, we want to enforce that it is true only if $x$ has a prefix other than the initial configuration, namely $q_0 w abc$. We only need to cover cases when $f_1$ is not satisfied (so we may assume that $x$ belongs to $q_0 0^+a^+b^+c^+(Q0^+a^+b^+c^+)^*$. Thus $f_2$ is given as:

\begin{align*} & u0 = 0u \\
&\land ( \bigvee\limits_{A_1A_2A_3 \not= abc \land A_1 \not= 0} x = q_0 u A_1 A_2 A_3 v  \\ 
&\lor (x = q_0 u av \land (w = u 0 u' \lor w0u' = u))).
\end{align*}

In the above formula, $u$ must be the complete sequence of $0$s occurring after $q_0$. The cases when the next three letters after $u$ are not $abc$ and when $u\not= w$ are then covered by the second and third lines.

$f_2$ can be constructed similarly as follows:
\[ u0 = 0u \land \left( \bigvee\limits_{A_0A_1A_2A_3 \not= 0abc} x = v A_0 A_1 A_2 A_3 \lor \bigvee\limits_{q \in Q\backslash F} x = v q u abc \right).\]
The first of the two disjuncts inside the brackets covers all cases when $x$ does not end with a configuration of the form $q0^*abc$, or in other words when all tape heads have not returned to their leftmost positions. The second disjunct covers the cases when tape heads are in their leftmost positions but the state is not final.

Finally we construct $f_4$ as 
\begin{align*}
    \bigvee\limits_{q,q' \in Q} &( x = u q y_1 y_2 y_3 y_4 q' z_1 z_2 z_3 z_4 v \land \\
    & y_1 0 = 0 y_1 \land \\
    & y_2 a = a y_2 \land \\
    & y_3 b = b y_3 \land \\
    & y_4 c = c y_4 \land \\
    & z_1 0 = 0 y_1 \land \\
    & z_2 a = a z_2 \land \\
    & z_3 b = b z_3 \land \\
    & z_4 c = c z_4 \land \\
    & (y_1 0 u' = z_1 \lor y_1 = z_1 0 u' \lor \\
    & y_2 aa u' = z_2 \lor y_2 = z_2 aa u' \lor \\
    & y_3 bb u' = z_3 \lor y_3 = z_3 bb u' \lor \\
    & y_4 cc u' = z_4 \lor y_4 = z_4 cc u' ) \lor \\
    &  \bigvee\limits_{f' \in D} f')
\end{align*}

where $D$ is a set of formulas describing transitions which are not possible in $M$, which is again given below. Essentially the first 9 lines of $f_4$ enforces that $q y_1 y_2 y_3 y_4$ and $q' z_1 z_2 z_3 z_4$ are consecutive configurations in $x$ and that $q,q'$ represent states, while $y_1,z_1$ are the part containing $0$'s, $y_2,z_2$ contain the $a$'s $y_3,z_3$ contain the $b$'s and $y_4z_4$ contain the $c$'s. The 10th line accounts for when the input word is not correctly copied from the first configuration to the next, while the 11th, 12th and 13th lines account for when one of the tape heads moves two or more positions.

Thus the formulas from $D$ must cover the cases when the input is copied correctly and all tape heads move at most one position, but the transition is still not valid. Thus $D$ contains the following formula where $B_2 = a, B_3 = b, B_4 = c$, which covers the case that two tape heads move at the same time.

\begin{align*}
\bigvee\limits_{i,j \in \{2,3,4\} \land i \not=j}
    ((y_iB_i = z_i \lor y_i = z_i B_i) \land (y_j B_j = z_j \lor y_j = z_j B_j))
\end{align*}

Moreover, for every "legal" transition not specified by $\delta$ (so every transition having the correct form but not allowed in the specific 2CM $M$), $D$ contains a formula describing this transition. We provide an example in the case that the transition $\delta(q_1,0,T,F) = (q_2, 2, L)$ is present in $M$. The example can easily be adapted for other combinations. The formula will enforce that if the current state in the first configuration is $q_1$, and the first counter is $0$ while the second is not (we only need to consider cases where the input letter is $0$), then the next configuration should not have state $q_2$, or should increment the second counter or leave it unchanged, or should change either the first counter or the input tape head position. This covers all possible ways the transition $\delta(q_1,0,T,F) = (q_2, 2, L)$ is not respected. Formally the formula is given as:

\begin{align*}
    &q = q_1 \land y_3 = b \land y_4 = bb v' \land \\
    & (\bigvee\limits_{q_2' \in Q \backslash \{q_2\}} q' = q_2' \lor \\
    & y_2 = z_2 a \lor y_2 a = z_2
    \lor\\
    & y_4 = z_4 c \lor y_4 c = z_4 \lor \\
    & y_3 = z_3 \lor y_3 = z_3 b).
\end{align*}
The first line establishes the appropriate conditions of the first configuration, while the second line covers cases where the second configuration has the wrong state, the third and fourth lines covers the case that one of the wrong tape heads moves while the final line covers the case that the correct tape head moves in the wrong direction or not at all. Inclusion of similar formulas in $D$ for other transitions completes the construction of $f$.

All together, we have shown a construction for a formula $f$ which can be satisfied for a particular value of $x$ if and only if $x$ is not a valid computation history for $M$ on input word of the form $0^+$. In other words, the language expressed by $x$ in $f$ is exactly the complement of $S_M$. Thus it is regular if and only if $M$ accepts only finitely many words from $0^+$. This completes the reduction and we may conclude that the problem of deciding whether a formula and variable from $\WE$ express a regular language is undecidable as claimed.
\end{proof}

Although a trivial consequence of Theorem~\ref{the:undecifWEisREG}, is is somehow surprising that it remains undecidable if a word equation combined with regular constraints expresses a regular language.
%or not.\looseness=-1

Finally, we note the remaining cases which correspond to removing regular language membership constraints in the presence of word equations, and removing length constraints in the presence of word equations but without regular constraints.  Thus, we leave the following questions open:

\begin{openproblem}
Is it decidable whether a language expressed in $\WE+\REG$ (respectively, $\WE + \REG + \LEN$) can be expressed in $\WE$ (respectively, $\WE + \LEN$)? Is it decidable whether a language expressed in $\WE + \LEN$ can be expressed in $\WE$?
\end{openproblem}

\bibliography{New_version}

\end{document}